\def\AC{{\mathsf{AC}}}
\def\BIN{{\mathrm{BIN}}}
\def\CRR{{\mathrm{CRR}}}
\def\CTL{{\mathsf{CTL}}}
\def\div{{\mathrm{div}}}
\def\EF{{\mathsf{EF}}}
\def\EXPTIME{{\mathsf{EXPTIME}}}
\def\inp{{\mathsf{in}}}
\def\NC{{\mathsf{NC}}}
\def\Omc{{\mathbb{O}}}
\def\out{{\mathsf{out}}}
\def\Pmc{{\mathcal{P}}}
\def\path{{\mathrm{path}}}
\def\PSPACE{{\mathsf{PSPACE}}}
\def\TC{{\mathsf{TC}}}
\def\U{{\mathsf{U}}}
\newcommand{\ValOne}{\text{ValOne}}
\newcommand{\OptValOne}{\text{OptValOne}}
\newcommand{\LCM}{\text{LCM}}
\theoremstyle{plain}
\theoremstyle{plain}
\theoremstyle{definition}
\begin{document}

\title[Branching-time model checking of one-counter processes]{Branching-time model checking\\ of one-counter processes}

\author[lab1]{S. G\"oller}{Stefan G\"oller}
\address[lab1]{
Universit\"at Bremen, Fachbereich Mathematik und Informatik} 
\email{goeller@informatik.uni-bremen.de} 

\author[lab2]{M. Lohrey}{Markus Lohrey}
\address[lab2]{Universit\"at Leipzig, Institut f\"ur Informatik}
\email{lohrey@informatik.uni-leipzig.de}

\thanks{The second author
would like to acknowledge the support by DFG research project GELO} 

\keywords{model checking, computation tree logic, complexity theory}
\subjclass{F.4.1; F.1.3}

\begin{abstract}
\noindent 
One-counter processes (OCPs) are pushdown processes which operate only on a
unary stack alphabet. We study the computational complexity of model checking
computation tree logic ($\CTL$) over OCPs. A $\PSPACE$ upper bound is inherited from
the modal $\mu$-calculus for this problem. First, we analyze the periodic behaviour
of $\CTL$ over OCPs and derive a model checking algorithm whose running time is
exponential only in the number of control locations and a syntactic notion of
the formula that we call leftward until depth. Thus, model checking fixed OCPs
against $\CTL$ formulas with a fixed leftward until depth is in $\P$. This generalizes
a result of the first author, Mayr, and To for the expression complexity of
$\CTL$'s fragment $\EF$. Second, we prove that already over some fixed OCP, $\CTL$ model
checking is $\PSPACE$-hard. Third, we show that there already exists a fixed $\CTL$
formula for which model checking of OCPs is $\PSPACE$-hard. For the latter,
we employ two results from complexity theory: (i) Converting a natural
number in Chinese remainder presentation into binary presentation is in
logspace-uniform $\NC^1$ and (ii) $\PSPACE$ is $\AC^0$-serializable. We demonstrate
that our approach can be used to answer further open questions.
\end{abstract}

\maketitle

\section{Introduction}

Pushdown automata (PDAs) (or recursive state machines) are a natural model
for sequential programs with recursive procedure calls, and their verification
problems have been studied extensively.
The complexity of model checking problems for PDAs is quite well understood: 
The reachability problem for PDAs can be solved in  polynomial time
\cite{BoEsMa97,EsHaRoSch00}. Model checking
modal $\mu$-calculus over PDAs was shown to be $\EXPTIME$-complete in
\cite{Wal01}, and the global version of the model checking
problem has been considered in \cite{Cachat02,PitVar04,Serr03}.
The $\EXPTIME$ lower bound
for model checking PDAs also holds for the simpler logic $\CTL$
and its fragment $\mathsf{EG}$ \cite{WalCTL00}, even for a fixed formula
(data complexity) \cite{Boz07} or a fixed PDA (expression complexity).
On the other hand, model checking PDAs against the logic $\EF$
(another natural fragment of $\CTL$) is $\PSPACE$-complete
\cite{WalCTL00}, and again the lower bound still
holds if either the formula or the PDA is fixed \cite{BoEsMa97}.
Model checking problems for various fragments and extensions
of PDL (Propositional Dynamic Logic) over PDAs were studied
in \cite{GolLoh06}.

One-counter processes (OCPs) are Minsky counter machines with just one
counter. They can also be seen as a special
case of PDAs with just one stack symbol, plus a non-removable
bottom symbol which indicates an empty stack (and thus allows to
test the counter for zero) and hence constitute a natural and fundamental
computational model. In recent years, 
model checking problems for OCPs received increasing attention
\cite{GoMaTo09,HKOW-09concur,Serr06,To09}.
Clearly, all upper complexity bounds carry over from PDAs.
The question, whether these upper bounds can be matched by lower bounds
was just recently solved for several important logics:
Model checking modal $\mu$-calculus over OCPs is $\PSPACE$-complete. 
The $\PSPACE$ upper bound was
shown in \cite{Serr06}, and a matching lower bound can
easily be shown by a reduction from emptiness of alternating
unary finite automata, which was shown to be $\PSPACE$-complete in
\cite{Ho96,JaSa07}.
This lower bound even holds if either the OCP or the
formula is fixed.  The situation becomes different for the fragment $\EF$.
In \cite{GoMaTo09}, it was shown that  model checking $\EF$ over 
OCPs is in the complexity class $\mathsf{P}^{\mathsf{NP}}$ 
(the class of all problems that can be solved on a deterministic
polynomial time machine with access to an oracle from $\mathsf{NP}$).
Moreover, if the input formula is represented succinctly
as a directed acyclic graph, then model checking $\EF$ over 
OCPs is also hard for $\mathsf{P}^{\mathsf{NP}}$.
For  the standard (and less succinct) tree representation for formulas,
only hardness for the class $\mathsf{P}^{\mathsf{NP}[\log]}$
(the class of all problems that can be solved on a deterministic
polynomial time machine which is allowed to make $O(\log(n))$ 
many queries to an oracle from $\mathsf{NP}$) was shown in \cite{GoMaTo09}.
In fact, there already exists a fixed $\EF$ formula
such that model checking this formula over a given OCP is hard for
$\mathsf{P}^{\mathsf{NP}[\log]}$, i.e., the data complexity is 
$\mathsf{P}^{\mathsf{NP}[\log]}$-hard.

In this paper we consider the model checking problem for $\CTL$
over OCPs. 
By the known upper bound for the modal $\mu$-calculus
\cite{Serr06} this problem belongs to $\PSPACE$.
First, we analyze the combinatorics of $\CTL$ model checking over OCPs.
More precisely, we analyze the periodic behaviour of the
set of natural numbers that satisfy a given $\CTL$ formula in a given
control location of the OCP (Thm.~\ref{thm-CTL-periodic}).
By making use of Thm.~\ref{thm-CTL-periodic}, we can derive a model checking
algorithm whose running time is exponential
only in the number of control locations and a syntactic measure on 
$\CTL$ formulas that we call leftward until depth
(Thm.~\ref{CTL upper bound_0}).
As a corollary, we obtain that model checking a fixed OCP
against $\CTL$ formulas of fixed leftward until depth lies in $\P$.
This generalizes a recent result
from \cite{GoMaTo09}, where it was shown that the expression complexity of
$\EF$ over OCPs lies in $\P$.
Next, we focus on lower bounds.
We show that model checking $\CTL$ over OCPs is 
$\PSPACE$-complete, even if we fix either 
the OCP (Thm.~\ref{Theo-CTL-expression}) or the $\CTL$ formula (Thm.~\ref{theo ctl data}).
The proof of Thm.~\ref{Theo-CTL-expression} uses a reduction from 
QBF. 
We have to construct a fixed OCP for which we can construct for a given unary
encoded number $i$ $\CTL$ formulas
that express, when interpreted over our fixed OCP, whether the current
counter value is divisible by $2^i$ and whether the $i^{\text{th}}$ bit
in the binary representation of the current counter value is $1$, respectively.  
For the proof of Thm.~\ref{theo ctl data}  ($\PSPACE$-hardness of
data complexity for $\CTL$)
we use two techniques from complexity theory, which to our knowledge have not
been applied in the context of verification so far:
(i) the existence of small
depth circuits for converting a number from Chinese remainder representation
to binary representation  
and
(ii) the fact that $\PSPACE$-computations are serializable in a certain sense
(see Sec.~\ref{S Tools} for details).
One of the main obstructions 
in getting lower bounds for OCPs is the fact that OCPs are well suited
for testing divisibility properties of the counter value and hence can
deal with numbers in Chinese remainder representation, but it is not 
clear how to deal with numbers in binary representation. Small depth circuits
for converting a number from Chinese remainder representation
to binary representation are the key in order to overcome this obstruction.

We are confident that our new lower bound techniques described above can be used for
proving further lower bounds for OCPs. We present two other applications
of our techniques in Sec.~\ref{further applications}: (i)
We show that model checking $\EF$ over OCPs is complete for $\mathsf{P}^{\mathsf{NP}}$ even if the input formula
is represented by a tree (Thm.~\ref{T EF}) and thereby solve an open problem
from \cite{GoMaTo09}.
(ii) We improve a lower bound on a decision problem for one-counter Markov
decision processes from \cite{BraBroEteKucWoj09} (Thm.~\ref{thm markov}).
The following table summarizes the picture on the complexity of model
checking for PDAs and OCPs. Our new results are marked with (*).

\medskip

\begin{center}
\begin{tabular}{l|l|l}
\textbf{Logic}        &  \textbf{PDA}     &  \textbf{OCP}\\ \hline
modal $\mu$-calculus  &  $\EXPTIME$-complete &  $\PSPACE$-complete \\ \hline
modal $\mu$-calculus, fixed formula &  $\EXPTIME$-complete &  $\PSPACE$-complete \\ \hline
modal $\mu$-calculus, fixed system &  $\EXPTIME$-complete &  $\PSPACE$-complete \\ \hline
$\CTL$, fixed formula & $\EXPTIME$-complete  &  $\PSPACE$-complete (*)\\ \hline
$\CTL$, fixed system & $\EXPTIME$-complete  &  $\PSPACE$-complete (*)\\ \hline
$\CTL$, fixed system, fixed leftward until depth & $\EXPTIME$-complete  &  in $\P$ (*)\\ \hline
$\EF$           & $\PSPACE$-complete  &  $\mathsf{P}^{\mathsf{NP}}$-complete (*)\\ \hline
$\EF$, fixed formula & $\PSPACE$-complete & $\mathsf{P}^{\mathsf{NP}[\log]}$-hard, in
$\mathsf{P}^{\mathsf{NP}}$ \\ \hline
$\EF$, fixed system & $\PSPACE$-complete & in $\P$ 
\end{tabular}
\end{center}

\medskip
\noindent
Missing proofs due to space restrictions can be found in the full
version of this paper \cite{GoLo09}.

\section{Preliminaries}{\label{S Pre}}

\newcommand{\N}{\mathbb{N}}
\newcommand{\Z}{\mathbb{Z}}
\renewcommand{\O}{\mathbb{O}}
\newcommand{\bit}{\text{bit}}

We denote the naturals by $\N=\{0,1,2,\ldots\}$.
For $i,j\in\N$ let $[i,j]=\{k\in\N\mid
i\leq k\leq j\}$ and $[j]=[1,j]$. In particular $[0] = \emptyset$.
For $n\in\N$ and $i\geq 1$, let $\bit_i(n)$ denote the $i^{\text{th}}$
least significant bit of the binary representation of $n$, i.e.,
$n=\sum_{i\geq 1} 2^{i-1}\cdot\bit_i(n)$.
For every finite and non-empty subset $M\subseteq\N\setminus\{0\}$, 
define $\LCM(M)$ to be the {\em least common multiple} of all numbers in $M$. 
It is known that  $2^k\leq\LCM([k])\leq4^k$ for all $k\geq 9$ \cite{Nai82}.
As usual, for a possibly infinite alphabet $A$, $A^*$ (resp. $A^\omega$)
denotes the set of all  finite (resp. infinite) words over $A$.
Let $A^\infty = A^* \cup A^\omega$ and $A^+ = A^* \setminus \{\varepsilon\}$,
where $\varepsilon$ is the empty word.  The length of a finite word 
$w$ is denoted by $|w|$.
For a word $w=a_1a_2\cdots a_n \in A^*$ (resp. $w=a_1a_2\cdots \in A^\omega$)
with $a_i\in A$ and $i \in [n]$ (resp. $i\geq 1$),
we denote by  
$w_i$ the $i^{\text{th}}$ letter $a_i$. 
A nondeterministic finite automaton (NFA) is a tuple
$A=(S,\Sigma,\delta,s_0,S_f)$, where $S$ is a finite set of {\em states},
$\Sigma$ is a {\em finite alphabet}, $\delta\subseteq S\times\Sigma\times S$ is the
{\em transition relation}, $s_0\in S$ is the {\em initial state}, and
$S_f\subseteq S$ is 
a set of {\em final states}.
We assume some basic knowledge in complexity
theory, see e.g. \cite{AroBar09} for more details. 

\section{One-counter processes and computation tree logic}{\label{S OCP CTL}}
\newcommand{\Prop}{\mathcal{P}}

Fix a countable set $\Prop$ of {\em propositions}.
A {\em transition system} is a triple $T=(S,\{S_p\mid p\in\Prop\}, \rightarrow)$, where
$S$ is the set of {\em states},
$\to\, \subseteq S \times S$ is the set of {\em transitions}
and $S_p\subseteq S$ for all $p \in \Prop$ with $S_p = \emptyset$ for all but 
finitely many $p\in\Prop$.  We write $s_1\rightarrow s_2$ instead of 
$(s_1,s_2)\in\,\rightarrow$. 
The set of all {\em finite} (resp. {\em infinite}) {\em paths} in $T$
is $\path_+(T) = \{ \pi \in S^+ \mid \forall i \in [|\pi|-1] : \pi_i
\to \pi_{i+1} \}$
(resp. $\path_\omega(T) = \{ \pi \in S^\omega \mid \forall i \geq 1 : \pi_i \to \pi_{i+1} \}$).
For a subset $U\subseteq S$ of states, a (finite or infinite) path $\pi$ is called a {\em $U$-path}
if $\pi \in U^\infty$.

A {\em one-counter process} (OCP) is a tuple
$\O=(Q,\{Q_p\mid p\in\Prop\},\delta_0,\delta_{>0})$, where 
$Q$ is a finite set of {\em control locations}, 
$Q_p\subseteq Q$ for all $p\in\Prop$ with $Q_p = \emptyset$ for all but
finitely many $p\in\Prop$, 
$\delta_0\subseteq Q \times\{0,1\}\times Q$ is a set of
{\em zero transitions},  and 
$\delta_{>0}\subseteq Q\times \{-1,0,1\}\times Q$ is a set of
{\em positive transitions}. 
The {\em size} of the OCP $\O$ is
$|\O|=|Q|+\sum_{p\in\Prop}|Q_p| +|\delta_0|+|\delta_{>0}|$.
The transition system defined by $\O$ is
$T(\O)=(Q\times\N,\{Q_p\times\N\mid p\in\Prop\},\rightarrow)$, where 
$(q,n)\rightarrow(q',n+k)$ if and only if either 
$n = 0$ and $(q,k,q')\in\delta_0$, or
$n > 0$ and $(q,k,q')\in\delta_{>0}$.
A {\em one-counter net} (OCN) is an OCP, where
$\delta_0\subseteq\delta_{>0}$.
For $(q,k,q') \in \delta_0 \cup \delta_{>0}$ we usually
write $q \xrightarrow{k} q'$.

\newcommand{\X}{\mathsf{X}}
\newcommand{\F}{\mathsf{F}}
\renewcommand{\G}{\mathsf{G}}
\newcommand{\WU}{\mathsf{WU}}
\newcommand{\links}{[\![}
\newcommand{\rechts}{]\!]}
\newcommand{\sem}[1]{\ensuremath{\links #1 \rechts}}
\newcommand{\true}{\texttt{true}}
\newcommand{\false}{\texttt{false}}

More details on the temporal logic $\CTL$ can be found 
for instance in \cite{BaiKat08}.
{\em Formulas} $\varphi$ of $\CTL$ are defined by the following
grammar, where $p\in\Prop$:
$$
\varphi\quad::=\quad p\ \mid\ \neg\varphi\ \mid\ \varphi\wedge\varphi\ \mid\ 
\exists\X\varphi\ \mid\ \exists\varphi\U\varphi\ \mid\ \exists\varphi\WU\varphi.
$$ 
Given a transition system $T=(S,\{S_p\mid p\in\Prop\},\rightarrow)$ 
and a $\CTL$ formula $\varphi$, we define 
the semantics $\links\varphi\rechts_T\subseteq S$ 
by induction on the structure of $\varphi$ as
follows:
$\sem{p}_T =S_p \text{ for each } p\in\Prop$,
$\sem{\neg\varphi}_T =  S\setminus\sem{\varphi}_T$,
$\sem{\varphi_1\wedge\varphi_2}_T= 
\sem{\varphi_1}_T\cap\sem{\varphi_2}_T$,
$\sem{\exists\X\varphi}_T = \{s\in S\mid \exists s'\in\sem{\varphi}_T: s\rightarrow
s'\}$, 
$\sem{\exists\varphi_1\U\varphi_2}_T  = 
\{s\in S\mid \exists  \pi \in \path_+(T) : \pi_1 = s, 
\pi_{|\pi|} \in\sem{\varphi_2}_T, \forall i\in[|\pi|-1] :
\pi_i\in\sem{\varphi_1}_T\}$,
$\sem{\exists\varphi_1\WU\varphi_2}_T =  \sem{\exists\varphi_1\U\varphi_2}_T\cup
\{s\in S\mid \exists \pi \in \path_\omega(T) : \pi_1 = s, 
\forall i\geq 1 : \pi_i\in\sem{\varphi_1}_T\}$.
We also write $(T,s)\models\varphi$ (or briefly $s \models\varphi$ 
if $T$ is clear from the context) for 
$s\in\sem{\varphi}_T$. We introduce the usual abbreviations 
$\varphi_1\vee\varphi_2=\neg(\neg\varphi_1\wedge\neg\varphi_2)$,
$\forall \X\varphi=\neg\exists\X\neg\varphi$,
$\exists\F\varphi=\exists (p \vee \neg p)\U\varphi$, and
$\exists\G\varphi=\exists\varphi\WU (p \wedge \neg p)$ for some $p\in\Prop$.
Formulas of the $\CTL$-fragment $\EF$ are given by the following grammar, where
$p\in\Prop$:
$\varphi::= p\ \mid \neg\varphi\ \mid\ \varphi\wedge\varphi\ \mid\
\exists\X\varphi\ \mid \exists\F\varphi$.
The {\em size} of $\CTL$ formulas is defined as follows: $|p|=1$,
$|\neg\varphi|=|\exists\X\varphi|=|\varphi|+1$, $|\varphi_1\wedge\varphi_2|=|\varphi_1|+|\varphi_2|+1$,
$|\exists\varphi_1\U\varphi_2|=|\exists\varphi_1\W\U\varphi_2|=|\varphi_1|+|\varphi_2|+1$.

\section{CTL on OCPs: Periodic behaviour  and upper bounds}\label{S Upper}

\newcommand{\lud}{\mathrm{lud}}
The goal of this section is to prove a periodicity property of $\CTL$ over
OCPs, which implies an upper bound for $\CTL$ on OCPs, see 
Thm.~\ref{CTL upper bound_0}. As a corollary, we state
that for a fixed OCP, $\CTL$ model checking restricted
to formulas of fixed leftward until depth (see the definition below)
can be done in polynomial time.
We define the {\em leftward until depth $\lud$} of $\CTL$ formulas
inductively as follows: $\lud(p)=0$  for $p\in\Prop$,
$\lud(\neg\varphi)= \lud(\exists\X\varphi) = \lud(\varphi)$,
$\lud(\varphi_1\wedge\varphi_2)=\max\{\lud(\varphi_1),\lud(\varphi_2)\}$,
$\lud(\exists\varphi_1\U\varphi_2) = \lud(\exists\varphi_1\W\U\varphi_2) = 
\max\{\lud(\varphi_1)+1,\lud(\varphi_2)\}$.
A similar definition of until depth can be found in \cite{TheWi96}, but
there  the until depth of $\exists\varphi_1\U\varphi_2$ is 
1 plus the maximum of the until depths of $\varphi_1$ and 
$\varphi_2$. Note that $\lud(\varphi)\leq 1$ for every $\EF$ formula $\varphi$.

Let us fix an OCP $\O=(Q,\{Q_p\mid p\in\Prop\},\delta_0,\delta_{>0})$ 
for the rest of this section. 
Let $|Q|=k$ and define $K=\LCM([k])$ and
$K_\varphi=K^{\lud(\varphi)}$ for each $\CTL$ formula $\varphi$.

\begin{theorem} \label{thm-CTL-periodic}
For all $\CTL$ formulas $\varphi$,
all $q \in Q$ and all
$n,n'>2\cdot|\varphi|\cdot k^2\cdot K_\varphi$ with $n\equiv n' \text{ mod } K_\varphi$:
\begin{eqnarray}{\label{E period}}
(q,n)\in\sem{\varphi}_{T(\O)}\quad \Longleftrightarrow \quad 
(q,n')\in\sem{\varphi}_{T(\O)}.
\end{eqnarray} 
\end{theorem}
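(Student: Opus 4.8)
The plan is to prove the periodicity statement by induction on the structure of the $\CTL$ formula $\varphi$. The base case $\varphi = p$ is immediate, since $\sem{p}_{T(\O)} = Q_p \times \N$ depends only on the control location, so $(q,n) \in \sem{p}$ iff $(q,n') \in \sem{p}$ for any $n,n'$. The Boolean cases are equally routine: for $\neg\varphi$ and $\varphi_1 \wedge \varphi_2$ the leftward until depth does not increase (so $K_{\neg\varphi} = K_\varphi$ and $K_{\varphi_1 \wedge \varphi_2} = K^{\max\{\lud(\varphi_1),\lud(\varphi_2)\}}$ is a common multiple of both $K_{\varphi_1}$ and $K_{\varphi_2}$), and the threshold $2|\varphi|k^2 K_\varphi$ dominates the thresholds of the subformulas, so the inductive hypotheses transfer directly once one checks that $n \equiv n' \bmod K_\varphi$ implies the congruences needed for the subformulas. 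The interesting work is entirely in the modal cases $\exists\X$, $\exists\U$, and $\exists\WU$.

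For $\exists\X\varphi$ the leftward until depth is unchanged, and a single transition changes the counter by at most $1$; so I would argue that for $n,n'$ both above the (slightly shifted) threshold, the successors $(q',n\pm 1)$ and $(q',n'\pm 1)$ are still above the threshold for $\varphi$ and satisfy $n \pm 1 \equiv n' \pm 1 \bmod K_\varphi$, letting the inductive hypothesis for $\varphi$ finish the case. The genuinely hard cases are the until operators, where the period multiplies by a factor of $K = \LCM([k])$ (reflecting that $\lud(\exists\varphi_1\U\varphi_2) = \max\{\lud(\varphi_1)+1, \lud(\varphi_2)\}$). Here the key object is a witnessing path for $\exists\varphi_1\U\varphi_2$ from $(q,n)$: a finite $\sem{\varphi_1}$-path ending in $\sem{\varphi_2}$. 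The plan is to take such a path and ``shift'' it to start from $(q,n')$, producing a valid witnessing path there. The crux is that the counter values along the path cannot drop to zero without potentially invoking the zero-transitions $\delta_0$, which behave differently from $\delta_{>0}$; so I must control how low the witnessing path descends.

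The main obstacle, and the heart of the proof, is this shifting argument for the until operators. I would split a witnessing path into segments that stay in the ``high'' region (counter value large, where only $\delta_{>0}$ applies and the behaviour is translation-invariant) and segments that dip into the ``low'' region near zero. For the high segments, one can use a pumping/cycle argument: since there are only $k = |Q|$ control locations, any sufficiently long stretch of positive transitions contains a cycle in the control graph whose net counter effect divides into the period. The factor $K = \LCM([k])$ is chosen precisely so that such cycles can be inserted or deleted to adjust the counter by exactly $n' - n$ (a multiple of $K_\varphi$) while the control location realigns, using that the length of any simple cycle is at most $k$ and hence its counter displacement is absorbed by $\LCM([k])$. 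The inductive hypothesis applied to $\varphi_1$ and $\varphi_2$ (whose thresholds and periods are smaller by the recursion) guarantees that the membership of each intermediate and final state in $\sem{\varphi_1}$ respectively $\sem{\varphi_2}$ is preserved under this shift. The low region, where the path comes close to the zero-test boundary, must be shown to be reachable identically from both $(q,n)$ and $(q,n')$ because the high portion of the path can be made congruent mod $K_\varphi$ so that it arrives at the same low configuration; this is where the chosen threshold $2|\varphi|k^2 K_\varphi$ earns its value, ensuring enough room above zero to perform the cycle surgery without the pumped path ever being forced through a zero configuration it should avoid. The $\exists\WU$ case reduces to the $\exists\U$ case together with an analogous argument for infinite $\sem{\varphi_1}$-paths, where the infinite path is handled by locating a repeating configuration (a lasso) and pumping its cyclic part, again exploiting the $\LCM([k])$ period so the shifted infinite path stays within $\sem{\varphi_1}$ by the inductive hypothesis.
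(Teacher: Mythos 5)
Your proposal follows the same skeleton as the paper's proof (structural induction; routine propositional and $\exists\mathsf{X}$ cases; for the until operators, surgery on a witnessing path in the region far from zero, followed by a shift by $n'-n$), but the surgery mechanism you describe does not work, and the failure is exactly at the point the whole proof turns on. The inductive hypothesis for $\psi_1$ in $\varphi=\exists\psi_1\mathsf{U}\psi_2$ transfers $\psi_1$-membership only between counter values that are congruent modulo $K_{\psi_1}=K^{\mathrm{lud}(\psi_1)}$ (and above $\psi_1$'s threshold). Your surgery inserts or deletes \emph{simple} control-graph cycles, whose counter displacement $c$ satisfies $1\leq|c|\leq k$; as soon as $\mathrm{lud}(\psi_1)\geq 1$, $K_{\psi_1}$ is at least $\LCM([k])$ and hence (apart from degenerate tiny $k$) strictly larger than $k$, so such a displacement is never a nonzero multiple of $K_{\psi_1}$. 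Deleting one such cycle shifts every state downstream of the cut by $c$, and inserting $r$ consecutive repetitions of it creates states shifted by $c,2c,\dots,rc$; for none of these states does the inductive hypothesis say anything, so your assertion that ``the inductive hypothesis \dots guarantees that the membership of each intermediate and final state \dots is preserved under this shift'' is unjustified. Deleting several scattered cycles with displacements $c_1,\dots,c_m$ does not help either: the states between consecutive cuts are shifted by the partial sums $c_1+\dots+c_j$, and requiring all partial sums to be $\equiv 0 \bmod K_{\psi_1}$ forces every single $c_j$ to be a multiple of $K_{\psi_1}$, which simple cycles cannot provide. The divisibility you invoke — $|c|$ divides $\LCM([k])$ — is beside the point; what is needed is that $K_{\psi_1}$ divides the displacement of each inserted or deleted piece.

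The paper supplies precisely this missing idea. In a descending stretch of the path it considers counter levels spaced $K_{\psi_1}$ apart ($k$ consecutive sub-intervals, hence $k+1$ boundaries); by pigeonhole, two of these boundary levels are first visited with the same control location, so the path segment between these two visits is a ``cycle'' whose height is \emph{exactly} $c\cdot K_{\psi_1}$ for some $c\in[k]$. Cutting it out shifts the suffix by a multiple of $K_{\psi_1}$, so the inductive hypothesis does apply. A second pigeonhole over $k\cdot K$ such intervals yields one value $c\in[k]$ realized at least $K$ times, and deleting $K/c\in\mathbb{N}$ of the corresponding segments flattens the path by exactly $K\cdot K_{\psi_1}$; this is where $\LCM([k])$ actually enters (it makes $K/c$ an integer for whatever $c$ the pigeonhole returns), and it is why the period is multiplied by $K$ at each leftward until. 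After this flattening bounds the height variation of the witnessing path, the final downward shift by $n'-n$ (a multiple of $K_{\psi_1}$ and of $K_{\psi_2}$) goes through as you intend. Your treatment of the low region matches the paper's intent, but your $\exists\mathsf{WU}$ plan has the further problem that an infinite $\psi_1$-path with unbounded counter has no repeated configuration, so the lasso degenerates to pumping a control-location cycle and hits the same congruence obstruction. Without surgery pieces whose height is a nonzero multiple of $K_{\psi_1}$, the induction cannot be closed.
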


\begin{proof}[Proof sketch]
We prove the theorem by induction on the structure of $\varphi$. 
We only treat the difficult case $\varphi=\exists\psi_1\U\psi_2$ here.
Let $T = \max\{ 2\cdot|\psi_i|\cdot k^2\cdot K_{\psi_i}\mid i\in\{1,2\}\}$.
Let us prove equivalence (\ref{E period}). 
Note that $K_\varphi=\LCM\{K\cdot K_{\psi_1},K_{\psi_2}\}$ by definition.
Let us fix an arbitrary control location $q\in Q$ and naturals $n,n'\in\N$
such that $2\cdot|\varphi|\cdot k^2\cdot K_\varphi<n<n'$ and $n\equiv n' \text{ mod } K_\varphi$. We have
to prove that  $(q,n)\in\sem{\varphi}_{T(\O)}$ if and only if 
$(q,n')\in\sem{\varphi}_{T(\O)}$.
For this, let $d=n'-n$, which is a multiple of $K_\varphi$.
We only treat the ``if''-direction here and recommend the reader to consult
\cite{GoLo09} for helpful illustrations.
So let us assume that $(q,n')\in\sem{\varphi}_{T(\O)}$.  To prove that
$(q,n)\in\sem{\varphi}_{T(\O)}$, we will use the following claim.

\medskip

\noindent
{\em Claim:}
Assume some $\sem{\psi_1}_{T(\O)}$-path
$\pi = [(q_1,n_1) \to (q_2,n_2) \to \cdots \to (q_l,n_l)]$ 
with $n_i > T$ for all $i \in [l]$ and $n_1-n_l\geq k^2\cdot K\cdot K_{\psi_1}$. 
Then there exists a $\sem{\psi_1}_{T(\O)}$-path
from $(q_1,n_1)$ to $(q_l,n_l+K\cdot K_{\psi_1})$, whose counter values are all
strictly above $T+K\cdot K_{\psi_1}$.

\medskip
\noindent
The claim tells us that paths that lose height
at least $k^2\cdot K\cdot K_{\psi_1}$ and
whose states all have counter values strictly
above $T$ can be flattened (without changing the starting state) 
by height $K\cdot K_{\psi_1}$.

\medskip
\noindent
{\em Proof of the claim.}
For each counter value $h\in\{n_i\mid i\in[l]\}$ that appears in $\pi$, let
$\mu(h)=\min\{i\in[l]\mid n_i=h\}$ denote the minimal position
in $\pi$ whose corresponding state has counter value $h$.
Define $\Delta=k\cdot K_{\psi_1}$. 
We will be interested in
$k\cdot K$ many consecutive intervals (of counter values) each of size $\Delta$.
Define the bottom $b=n_1-(k\cdot K)\cdot\Delta$. 
Formally, an {\em interval} is a set
$I_i=[b+(i-1)\cdot\Delta,b+i\cdot\Delta]$ for some $i\in[k\cdot K]$. 
Since each interval has size $\Delta=k\cdot K_{\psi_1}$, we can think of each
interval
$I_i$ to consist of $k$ consecutive {\em sub-intervals} of size $K_{\psi_1}$ each.
Note that each sub-interval has two extremal elements, namely its {\em upper} and
{\em lower boundary}. Thus all $k$ sub-intervals have $k+1$ boundaries in total.
Hence, by the pigeonhole principle, 
for each interval $I_i$, there exists some $c_i\in[k]$ and 
two distinct boundaries $\beta(i,1) > \beta(i,2)$ of distance $c_i\cdot
K_{\psi_1}$ such that the control location of $\pi$'s earliest state
of counter value $\beta(i,1)$ agrees with the control location of $\pi$'s
earliest state of counter value $\beta(i,2)$, i.e., formally
$q_{\mu(\beta(i,1))}\ =\ q_{\mu(\beta(i,2))}$.
%The situation is depicted in Fig.~\ref{F Blocks}. 
Observe that flattening the path $\pi$ by gluing together $\pi$'s states
at position $\mu(\beta(i,1))$ and $\mu(\beta(i,2))$ (for this,
we add $c_i \cdot K_{\psi_1}$ to each counter value at a position  
$\geq \beta(i,2)$) still results in a
$\sem{\psi_1}_{T(\O)}$-path by induction hypothesis, since we reduced the
height of $\pi$ by a multiple of $K_{\psi_1}$.
Our overall goal is to flatten $\pi$ by gluing together states only of certain 
intervals such that we obtain a path whose height is in total by precisely
$K\cdot K_{\psi_1}$ smaller than $\pi$'s.
Recall that there are $k\cdot K$ many intervals.
By the pigeonhole principle there is some $c\in[k]$ such that $c_i=c$ for at
least $K$ many intervals $I_i$. By gluing together $\frac{K}{c} \in \mathbb{N}$ pairs of states of distance
$c\cdot K_{\psi_1}$ each, we reduce $\pi$'s height by exactly
$\frac{K}{c}\cdot c\cdot K_{\psi_1}=K\cdot K_{\psi_1}$. This proves the claim.

Let us finish the proof the ``if''-direction.
Since by assumption $(q,n')\in\sem{\varphi}_{T(\O)}$, there exists
a finite path $ \pi\ =\ (q_1,n_1)\rightarrow(q_2,n_2)\to \cdots \to (q_l,n_l)$,
where $\pi[1,l-1]$ is a $\sem{\psi_1}_{T(\O)}$-path, $(q,n')=(q_1,n_1)$, 
and where $(q_{l},n_{l})\in\sem{\psi_2}_{T(\O)}$. 
To prove $(q,n)\in\sem{\varphi}_{T(\O)}$, we will assume that
$n_j>T$ for each $j\in[l]$. The case when $n_j=T$ for some $j\in[l]$ can
be proven similarly.
Assume first that the path $\pi[1,l-1]$ contains two states whose counter difference
is at least $k^2\cdot K\cdot K_{\psi_1}+K_\varphi$ which is (strictly) greater
than $k^2\cdot K\cdot K_{\psi_1}$. Since $K_\varphi$ is a multiple of $K\cdot
K_{\psi_1}$ by definition, we can apply the above claim $\frac{K_\varphi}{K\cdot
K_{\psi_1}}\in\N$ many times to $\pi[1,l-1]$. This reduces the height by
$K_\varphi$. We repeat this flattening process of $\pi[1,l-1]$ by height
$K_\varphi$ as long as possible, i.e., until any two states
have counter difference smaller than $k^2\cdot K \cdot K_{\psi_1}+K_\varphi$.
Let $\sigma$ denote the $\sem{\psi_1}_{T(\O)}$-path starting in $(q,n')$ that we obtain
from $\pi[1,l-1]$ by this process. Thus, $\sigma$ ends in some state, whose
counter value is congruent $n_{l-1}$ modulo $K_\varphi$ (since we
flattened $\pi[1,l-1]$ by a multiple of $K_\varphi$). 
Since $K_\varphi$ is in turn a multiple of $K_{\psi_2}$, we can
build a path $\sigma'$ which extends the path $\sigma$ by a single
transition to some state that satisfies $\psi_2$ by induction hypothesis. 
Moreover, by our flattening process,
the counter difference between any two states in $\sigma'$ is at most
$k^2\cdot K\cdot K_{\psi_1}+K_\varphi\leq 2\cdot k^2\cdot K_\varphi$. 
Recall that $T=\max\{2\cdot|\psi_i|\cdot k^2\cdot K_{\psi_i}\mid i\in\{1,2\}\}$.
As
$$n \ >\ 2\cdot|\varphi|\cdot k^2\cdot K_\varphi\ =\
2\cdot(|\varphi|-1+1)\cdot k^2\cdot K_\varphi\ \geq\ T+2\cdot k^2\cdot K_\varphi,$$ 
it follows that the path
that results from $\sigma'$ by subtracting $d$ from each counter value
(this path starts in $(q,n)$) is strictly above $T$. 
Moreover, since $d$ is a multiple of 
$K_{\psi_1}$ and $K_{\psi_2}$, this path
witnesses $(q,n)\in\sem{\varphi}_{T(\O)}$ by induction hypothesis.
\end{proof}
The following result can be obtained basically by 
using the standard model checking algorithm for $\CTL$ on finite
systems (see e.g. \cite{BaiKat08}) in combination with Thm.~\ref{thm-CTL-periodic}.

\begin{theorem} \label{CTL upper bound_0}
For a given one-counter process $\O=(Q,\{Q_p\mid p\in\Prop\},\delta_0,\delta_{>0})$,
a $\CTL$ formula $\varphi$, a control location $q\in Q$, and $n\in\N$ given in
binary, one can decide $(q,n)\in\sem{\varphi}_{T(\O)}$ in time 
$O(\log(n) +|Q|^3 \cdot |\varphi|^2 \cdot 4^{|Q| \cdot \lud(\varphi)} \cdot |\delta_0 \cup \delta_{>0}|)$.
\end{theorem}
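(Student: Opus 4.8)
The plan is to run the standard bottom-up $\CTL$ labelling algorithm for finite systems, but applied to a \emph{finite} abstraction of the infinite transition system $T(\Omc)$ that is justified by the periodicity of Theorem~\ref{thm-CTL-periodic}. First I would observe that, for any subformula $\psi$ of $\varphi$, the set $\sem{\psi}_{T(\Omc)}$ restricted to counter values above the threshold $\tau_\psi = 2\cdot|\psi|\cdot k^2\cdot K_\psi$ is completely determined by its behaviour on one period of length $K_\psi = K^{\lud(\psi)}$, where $K=\LCM([k])$. Hence for each control location $q$ and each $\psi$ it suffices to store a finite table: the truth value of $(q,n)\models\psi$ for all $n$ in the \emph{initial segment} $n \le \tau_\psi$ together with one representative residue block of width $K_\psi$ just above $\tau_\psi$. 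Since $K\le 4^k$ by the bound $\LCM([k])\le 4^k$, we have $K_\psi \le 4^{k\cdot\lud(\psi)} \le 4^{|Q|\cdot\lud(\varphi)}$, which is exactly the source of the $4^{|Q|\cdot\lud(\varphi)}$ factor in the claimed running time.

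The key steps, in order, are as follows. (1) Fix a cutoff $N = \tau_\varphi + K_\varphi$, chosen large enough that by Theorem~\ref{thm-CTL-periodic} the truth value of every subformula at any counter value $n > N$ equals its value at the unique $n' \in (\tau_\varphi, N]$ with $n'\equiv n \bmod K_\varphi$; note $N = O(k^2\cdot|\varphi|\cdot K_\varphi) = O(|Q|^2\cdot|\varphi|\cdot 4^{|Q|\cdot\lud(\varphi)})$, and all relevant thresholds $\tau_\psi$ for subformulas are bounded by $N$. (2) Build the finite transition system $T_N$ on the state set $\{(q,n) : q\in Q,\ 0\le n\le N\}$ using $\delta_0$ and $\delta_{>0}$, identifying the top layer with its periodic copy so that transitions leaving the segment wrap around correctly. (3) Process the subformulas of $\varphi$ in increasing size order, and for each one run the appropriate finite-state $\CTL$ labelling routine over $T_N$: Boolean cases and $\exists\X$ are local, while $\exists\psi_1\U\psi_2$ and $\exists\psi_1\WU\psi_2$ are computed by the standard backward-reachability / greatest-fixpoint saturation restricted to the already-labelled $\psi_1$-states. (4) Finally, to answer $(q,n)\in\sem{\varphi}_{T(\Omc)}$ for the binary-encoded input $n$: if $n\le N$ read the table directly, and if $n>N$ reduce $n$ modulo $K_\varphi$ into the representative window, which costs $O(\log n)$ for the arithmetic on the binary encoding — this is the lone $\log(n)$ summand in the bound.

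The complexity accounting is then a matter of bookkeeping. The state set of $T_N$ has $O(|Q|\cdot N) = O(|Q|^3\cdot|\varphi|\cdot 4^{|Q|\cdot\lud(\varphi)})$ states, and each transition of $T(\Omc)$ induces at most one edge per layer, so the edge count is $O(|Q|\cdot N\cdot|\delta_0\cup\delta_{>0}|)$. Each of the $O(|\varphi|)$ subformulas is handled by a labelling pass that is linear in the size of $T_N$ (number of states plus edges), which multiplies in an extra $|\varphi|$ and yields the $|Q|^3\cdot|\varphi|^2\cdot 4^{|Q|\cdot\lud(\varphi)}\cdot|\delta_0\cup\delta_{>0}|$ term; adding the $\log(n)$ for the final modular reduction gives the stated running time.

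The main obstacle I anticipate is not the complexity count but the \emph{correctness of the wrap-around construction} in step (2)--(3), specifically ensuring that the periodic identification at the top of the segment is sound for the \emph{until} and \emph{weak until} operators, whose witnessing paths may climb arbitrarily high in $T(\Omc)$. The subtlety is that a $\sem{\psi_1}\U\psi_2$ witness in the true system could traverse counter values far above $N$, so one must argue that any such witness can be \emph{folded back} into the finite window without affecting satisfaction — which is precisely what the flattening claim inside the proof of Theorem~\ref{thm-CTL-periodic} provides. Thus the delicate point is to invoke Theorem~\ref{thm-CTL-periodic} at the level of each subformula (using that $\lud(\psi_i)<\lud(\varphi)$ for the arguments of an until, so that periods nest correctly via $K_\varphi=\LCM\{K\cdot K_{\psi_1},K_{\psi_2}\}$) rather than only at the top level, and to verify that the finite saturation over $T_N$ computes exactly the restriction of the true semantics to the window. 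Everything else is the routine finite-state $\CTL$ algorithm of \cite{BaiKat08}.
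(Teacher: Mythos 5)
Your overall route is the same as the paper's: combine Thm.~\ref{thm-CTL-periodic} with the standard bottom-up finite-state $\CTL$ labelling algorithm, run on a finite window of counter values of width roughly $2\cdot|\varphi|\cdot k^2\cdot K_\varphi + K_\varphi$, and spend the extra $O(\log n)$ on reducing the binary-encoded input modulo $K_\varphi$; your complexity bookkeeping also yields exactly the claimed bound. However, there is a genuine soft spot in the one step you yourself flag as delicate: the soundness of the wrap-around identification is \emph{not} discharged by the flattening claim. That claim speaks about genuine $\sem{\psi_1}_{T(\O)}$-paths of $T(\O)$ all of whose counter values stay \emph{above} the threshold $T$; the problematic objects are paths of the \emph{wrapped} system. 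Concretely, a wrapped witness for $\exists\psi_1\U\psi_2$ may traverse the wrap edge (from counter value $N$ down to $\tau_\varphi+1$) and afterwards descend below the threshold, e.g.\ all the way to counter value $0$ to reach a zero-test-guarded $\psi_2$-state. Its natural lift to $T(\O)$ ends at a counter value shifted upward by a positive multiple of $K_\varphi$, and below the threshold neither periodicity nor the flattening claim allows you to transfer the labels $\psi_1,\psi_2$ between the wrapped value and the lifted value. So spurious witnesses are not excluded by the argument as you state it.

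The repair stays within your framework but uses a different tool than the one you name: when processing a subformula $\psi=\exists\psi_1\U\psi_2$, invoke Thm.~\ref{thm-CTL-periodic} \emph{for $\psi$ itself} (not only for its operands), and argue by induction on the number of wrap-edge traversals of a wrapped witness. Decompose the witness at its first wrap edge: the prefix is a genuine $\sem{\psi_1}_{T(\O)}$-path of $T(\O)$ ending in a step $(q',N)\to(q'',\tau_\varphi+1)$ of the wrapped system; by the inner induction the wrapped suffix gives $(q'',\tau_\varphi+1)\in\sem{\psi}_{T(\O)}$; since both $\tau_\varphi+1$ and $\tau_\varphi+1+K_\varphi=N+1$ lie above the threshold of Thm.~\ref{thm-CTL-periodic} for $\psi$ (this is where you need the window bottom to be at least $2\cdot|\varphi|\cdot k^2\cdot K_\varphi$, which your choice of $N$ does guarantee), periodicity gives $(q'',N+1)\in\sem{\psi}_{T(\O)}$, and prepending the real prefix, which takes the rule used by the wrap edge but without wrapping, produces a real witness from $(q,n)$. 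The case $\exists\psi_1\WU\psi_2$ needs the analogous decomposition plus a separate treatment of wrapped cycles through the wrap edge: such a cycle that stays in the window lifts to a real infinite path whose counter values tend to infinity while remaining above the threshold, so periodicity of $\psi_1$ applies to its states; a cycle that dips below the window bottom is again handled by cutting at wrap edges. With this correction your construction and complexity analysis go through, and the result is the same argument the paper invokes.
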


As a corollary, we can deduce that for every fixed OCP $\O$ and
every fixed $k$ the question if for a given state $s$ and a given CTL formula
$\varphi$ with $\lud(\varphi)\leq k$, we have $(T(\O),s)\models\varphi$, is in
$\P$. This generalizes a result from \cite{GoMaTo09}, stating that the expression complexity of 
$\EF$ over OCPs is in $\mathsf{P}$.

\section{Expression complexity for CTL is hard for PSPACE}
{\label{S Expression}}

The goal of this section is to prove that model checking $\CTL$ is $\PSPACE$-hard
already over a fixed OCN.
We show this via a reduction from the well-known $\PSPACE$-complete problem QBF.
Our lower bound proof is separated into three steps. In step one, we 
define a family of $\CTL$ formulas $(\varphi_i)_{i\geq 1}$ such that over the
fixed OCN $\O$ that is depicted in Fig.~\ref{Fig-fixed-CCP} we can express (non-)divisibility by $2^i$.
In step two, we define a family of $\CTL$ formulas $(\psi_i)_{i\geq 1}$ such that
over $\O$ we can express if the $i^{\text{th}}$
bit in the binary representation of a natural is set to $1$.
In our final step, we give the reduction from QBF. 
For step one, we need the following simple fact which characterizes divisibility by
powers of two (recall that $[n] = \{1,\ldots,n\}$, in particular $[0] = \emptyset$):
\begin{equation}\label{fact1}
\forall n \geq 0, i \geq 1 : \text{ $2^i$ divides $n$ } \  \Leftrightarrow \
( 2^{i-1} \text{ divides } n  \;\wedge\; |\{n'\in[n] \mid 2^{i-1 }\text{ divides
} n' \}| \text{ is even})
\end{equation}
\begin{figure}[t]
\begin{center}
\setlength{\unitlength}{0.035cm}
\begin{picture}(375,100)(-65,20)
\put(-65,40){\huge$\delta_{>0}:$}
\gasset{Nframe=n,loopdiam=9,ELdist=.7}
\gasset{Nadjust=wh,Nadjustdist=1}
\gasset{curvedepth=0}
\node(tb)(25,80){$\overline{t}$}
\node(t)(25,40){$t$}
\node(q0)(80,20){$q_0$}
\node(q2)(80,100){$q_2$}
\node(q1)(115,60){$q_1$}
\drawloop[loopangle=0](q1){$-1$}	
\node(q3)(45,60){$q_3$}
\drawloop[ELpos=66,loopangle=240](q3){\tiny$-1$}	
\drawedge[ELside=r](q0,q1){\tiny$-1$}
\drawedge[ELside=r](q1,q2){\tiny$-1$}
\drawedge[ELside=r](q2,q3){\tiny$-1$}
\drawedge[ELside=l](q3,q0){\tiny$-1$}
\node(f)(-10,50){$f$}
\node(g)(-10,90){$g$}
\drawedge[AHnb=1,ATnb=1,ELside=l,curvedepth=0](q0,t){\tiny$0$}
\drawedge[AHnb=1,ATnb=1,ELside=l,ELpos=30,curvedepth=0](q1,tb){\tiny$0$}
\drawedge[AHnb=0,ATnb=1,ELside=r,curvedepth=0](q2,tb){\tiny$0$}
\drawedge[AHnb=1,ATnb=1,ELside=l,curvedepth=0](q3,tb){\tiny$0$}
\drawedge[AHnb=1,ELside=l,curvedepth=17](q2,t){\tiny$0$}
\drawedge[ELside=l,curvedepth=0](t,f){\tiny$0$}
\drawedge[ELside=l,curvedepth=0](tb,f){\tiny$-1$}
\drawedge[ELside=l,curvedepth=0,AHnb=1,ATnb=1](g,f){\tiny$-1$}
\node(p0)(5,120){$p_0$}
\node(p1)(45,120){$p_1$}
\drawedge[ELside=l,ELpos=50,curvedepth=5](tb,p1){\tiny$+1$}
\drawedge[ELside=l,ELpos=50,curvedepth=1](p1,tb){\tiny$0$}
\drawedge[ATnb=1,AHnb=1,ELside=r](p0,tb){\tiny$0$}
\drawloop[loopdiam=9,loopangle=0](p1){\tiny$+1$}

\put(165,40){\huge$\delta_{0}:$}
\gasset{Nframe=n,loopdiam=9,ELdist=.7}
\gasset{Nadjust=wh,Nadjustdist=1}
\gasset{curvedepth=0}
\node(tb)(225,25){$\overline{t}$}
\node(t)(285,35){$t$}
\node(q0)(310,25){$q_0$}
\node(f)(260,45){$f$}
\drawedge[ELside=r](t,q0){\tiny$0$}
\drawedge(t,f){\tiny$0$}
\node(p0)(205,65){$p_0$}
\node(p1)(245,65){$p_1$}
\drawedge[ELside=l,ELpos=50,curvedepth=0](tb,p1){\tiny$+1$}
\drawedge[ATnb=1,AHnb=1,ELside=r](p0,tb){\tiny$0$}	
\end{picture}
\end{center}
\caption{\label{Fig-fixed-CCP} The one-counter net $\O$ for which $\CTL$ model checking
is $\PSPACE$-hard}
\end{figure}
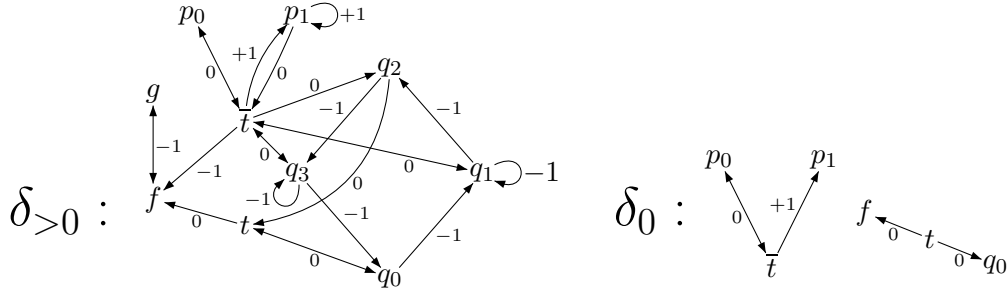
The set of propositions of $\O$ in Fig.~\ref{Fig-fixed-CCP}  
coincides with its control locations.
Recall that $\O$'s zero transitions are denoted by $\delta_0$ and $\O$'s positive
transitions are denoted by $\delta_{>0}$. Since $\delta_0\subseteq\delta_{>0}$, 
$\O$ is indeed an OCN.
Note that both $t$ and $\overline{t}$ are control locations of $\O$.
Now we define a family of $\CTL$ formulas
$(\varphi_i)_{i\geq 1}$ such
that for each $n\in\N$ we have: (i) $(t,n)\models\varphi_i$ if
and only if $2^i$ divides $n$ 
and (ii) $(\overline{t},n)\models\varphi_i$ if and only if 
$2^i$ does {\em not} divide $n$.
On first sight, it might seem superfluous to let the control location $t$
represent divisibility by powers of two and the control location $\overline{t}$ to represent
non-divisibility by powers of two since $\CTL$ allows negation. 
However the fact that we have {\em only one} family of formulas
$(\varphi_i)_{i\geq 1}$ to
express both divisibility and non-divisibility is a crucial technical subtlety
that is necessary in order to avoid an exponential blowup in formula size.
By making use of (\ref{fact1}), we construct 
the formulas $\varphi_i$ inductively.
\newcommand{\test}{\text{test}}
First, let us define the auxiliary formulas $\test=t\vee\overline{t}$ and
$\varphi_\diamond=q_0\vee q_1\vee q_2\vee q_3$.
Think of $\varphi_\diamond$ to hold in those control locations that altogether
are situated in the ``diamond'' in Fig.~\ref{Fig-fixed-CCP}.
We define
\begin{eqnarray*}
\varphi_1 &=& \test \wedge
\exists\X \, \left( f\wedge\EF(f\wedge\neg\exists\X g)\right) \text{ and } \\
\varphi_i & = & \test\ \wedge\ \exists\X \, \bigl( \exists (\varphi_\diamond\wedge
        \exists\X\varphi_{i-1})\ \U\ (q_0\wedge\neg\exists\X q_1) \bigr)
        \text{ for } i > 1.
\end{eqnarray*}
Since $\varphi_{i-1}$ is only used once in $\varphi_i$, we get
$|\varphi_i|\in O(i)$.
The following lemma states the correctness of the construction.

\begin{lemma}{\label{L Correctness}}
Let $n\geq 0$ and $i\geq 1$. Then 
\begin{itemize}
\item $(t,n)\models\varphi_i$ if and only if $2^i$ divides $n$.
\item $(\overline{t},n)\models\varphi_i$ if and only if $2^i$ does
not divide $n$.
\end{itemize}
\end{lemma}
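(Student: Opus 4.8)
The plan is to prove both bullet points simultaneously by induction on $i$, since the $\varphi_i$ share a common recursion and the locations $t,\overline t$ play dual roles. For the base case $i=1$ I would first analyse the $f$--$g$ gadget: with a positive counter the only moves at $f$ and $g$ are the decrements $f\xrightarrow{-1}g$ and $g\xrightarrow{-1}f$, so from $(f,m)$ there is a unique maximal path, the forced zig-zag $(f,m)\to(g,m-1)\to(f,m-2)\to\cdots$, which halts at $(f,0)$ when $m$ is even and at $(g,0)$ when $m$ is odd. As $(f,0)$ is the only reachable state fulfilling $f\wedge\neg\exists\X g$ (for $m>0$ the edge to $g$ is enabled), $\EF(f\wedge\neg\exists\X g)$ holds at $(f,m)$ exactly when $m$ is even. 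Chasing the single $\exists\X$ out of $t$ (via $t\xrightarrow{0}f$ to $(f,n)$) and out of $\overline t$ (via $\overline t\xrightarrow{-1}f$ to $(f,n-1)$) then yields $2\mid n$ at $t$ and $2\nmid n$ at $\overline t$, with the corner case $n=0$ at $\overline t$ checked by hand.

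For the inductive step I set $M=2^{i-1}$ and read off, via the induction hypothesis, the meaning of $\exists\X\varphi_{i-1}$ at each diamond location. The only $\test$-successor of $q_0$ and of $q_2$ is a $t$-state, whence $(q_0,m),(q_2,m)\models\exists\X\varphi_{i-1}$ iff $M\mid m$; the only $\test$-successor of $q_1$ and of $q_3$ is a $\overline t$-state, whence $(q_1,m),(q_3,m)\models\exists\X\varphi_{i-1}$ iff $M\nmid m$. Consequently, on every witnessing $\sem{\varphi_\diamond\wedge\exists\X\varphi_{i-1}}$-path the locations $q_0,q_2$ may occur only at multiples of $M$ and $q_1,q_3$ only at non-multiples. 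I would also note that $q_0\wedge\neg\exists\X q_1$ is satisfied at the single state $(q_0,0)$, since $q_0\xrightarrow{-1}q_1$ is enabled for every positive counter and $q_0$ has no zero-transition.

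Together with the rigid cyclic order $q_0\to q_1\to q_2\to q_3\to q_0$ (the self-loops at $q_1,q_3$ only permitting extra decrements), these checkpoints force any such path to descend through the multiples of $M$ below its start in strictly decreasing order, alternating $q_0,q_2,q_0,\ldots$ and skipping none---for skipping a multiple would place a $q_1$- or $q_3$-visit on a multiple, and between two consecutive $q_0$-visits (or $q_2$-visits) the intervening $q_1$- (resp.\ $q_3$-) segment occupies a contiguous block of counters that must avoid all multiples, forcing the next checkpoint onto the immediately preceding multiple. Hence starting at $(q_0,n)$---the only until-startable $\exists\X$-successor of $t$---already forces $M\mid n$, and the alternation reaches counter $0$ at $q_0$ iff $n/M$ is even; by~(\ref{fact1}) this is precisely $2^i\mid n$, giving the $t$-statement.

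For $\overline t$ I would exploit the choice in the initial $\exists\X$ among $q_1,q_2,q_3$. If $M\nmid n$ I may start at $q_1$ or at $q_3$, which pins the first multiple below $n$ at $q_2$ resp.\ $q_0$ and thereby realises either parity of $\lfloor n/M\rfloor$, so $(q_0,0)$ is always reachable---consistent with $M\nmid n\Rightarrow 2^i\nmid n$. If $M\mid n$, only the start at $q_2$ survives the checkpoint test, which fixes the phase so that $(q_0,0)$ is reachable iff $n/M$ is odd, i.e.\ iff $2^i\nmid n$. It remains to exclude the non-diamond successors $f,p_0,p_1$ of $\overline t$ as until-witnesses, which holds because none lies in $\sem{\varphi_\diamond}$ nor can reach $(q_0,0)$ without first re-entering the diamond through a non-$\varphi_\diamond$ state. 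The main obstacle I anticipate is making this rigidity argument watertight: I must show that the checkpoint conditions plus the diamond topology leave no freedom in the alternation---no skipped multiple, no repeated one---so that the parity of $\lfloor n/M\rfloor$ is faithfully recorded by the identity of the control location at counter $0$, with the two ends of the descent and the chosen start phase accounted for exactly.
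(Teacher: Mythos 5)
Your proof is correct and takes essentially the same route as the paper's: induction on $i$ using fact~(\ref{fact1}), with the inductive step driven by the observation that the only $\test$-successors of $q_0,q_2$ are $t$-states while those of $q_1,q_3$ are $\overline{t}$-states, so $\exists\X\varphi_{i-1}$ acts as a checkpoint pinning $q_0,q_2$ to multiples of $2^{i-1}$ and $q_1,q_3$ to non-multiples, after which the forced unit-decrement descent through the diamond records the parity of $\lfloor n/2^{i-1}\rfloor$ in the location reached at counter $0$. The paper only sketches this; your additional details (the forced zig-zag in the $f$--$g$ gadget for $i=1$, the no-skipped-multiple alternation argument, and the case analysis on the start location from $\overline{t}$) are faithful elaborations of the same argument.
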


\noindent
{\em Proof sketch.}
The lemma is proved by induction on $i$. The induction base
for $i = 1$ is easy to check. 
For $i > 1$, observe that $\varphi_i$ can only be true either in control location $t$ or
$\overline{t}$. 
Note that the formula right to the until symbol in $\varphi_i$ expresses that we are in $q_0$
and that the current counter value is zero.
Also note that the formula left to the until symbol
requires that $\varphi_\diamond$ holds, i.e., we are always in one of the four
``diamond control locations''. In other words, we decrement the counter
by moving along the diamond control locations (by possibly looping at $q_1$
and $q_3$) and always check if
$\exists\X\varphi_{i-1}$ holds, just until we are in $q_0$ and the counter value
is zero. Since there are transitions from $q_1$ and $q_3$ to $\overline{t}$
(but not to $t$), the induction hypothesis implies that 
the formula $\exists\X\varphi_{i-1}$ can be only true in $q_1$ and $q_3$ as
long as the current counter value is not divisible by $2^{i-1}$.
Similarly, since there are transitions from $q_0$ and $q_2$ to $t$
(but not to $\overline t$), the induction hypothesis implies that 
the formula $\exists\X\varphi_{i-1}$ can be only true in $q_0$ and $q_2$ if
the current counter value is divisible by $2^{i-1}$. With (\ref{fact1})
this implies the lemma.
\qed

\medskip

\noindent
For expressing if the $i^{\text{th}}$ bit of a natural is set to $1$, we make use of the
following simple fact:
\begin{equation}
\forall n \geq 0, i \geq 1 : 
\bit_i(n)=1 \ \Longleftrightarrow \
|\{n'\in[n]\mid 2^{i-1} \text{ divides } n'\}| \text{ is odd}
\end{equation}
Let us now define a family of $\CTL$ formulas $(\psi_i)_{i\geq 1}$
such that for each $n\in\N$ we have
$\bit_i(n)=1$	if and only if
$(\overline{t},n)\models\psi_i$. We set
$\psi_1= \varphi_1$ and 
$\psi_i=\overline{t} \wedge \exists\X \left((q_1\vee q_2)\ \wedge\ \mu_i
\right)$, 
where $\mu_i=
\exists (\varphi_\diamond\wedge
        \exists\X\varphi_{i-1})\ \U\ (q_0\wedge\neg\exists\X q_1)$
for each $i>1$.
Due to the construction of $\psi_i$ and since $|\varphi_i|\in O(i)$, we
obtain that $|\psi_i|\in O(i)$.
The following lemma states the correctness of the construction.

\begin{lemma}{\label{L Bit}}
Let $n \geq 0$ and let $i\geq 1$. Then
$(\overline{t},n)\models\psi_i$ if and only if $\bit_i(n)=1$.
\end{lemma}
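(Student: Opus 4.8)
The plan is to treat the base case $\psi_1=\varphi_1$ directly and, for $i>1$, to reduce the claim to an analysis of the until-formula $\mu_i$, which is exactly the formula already governing $\varphi_i$; throughout I may invoke Lemma~\ref{L Correctness} for $\varphi_{i-1}$, which is available for every $i-1\geq 1$. For $i=1$ the statement is immediate: $\bit_1(n)=1$ holds precisely when $n$ is odd, i.e. when $2$ does not divide $n$, so by Lemma~\ref{L Correctness} we get $(\overline t,n)\models\psi_1=\varphi_1$ iff $\bit_1(n)=1$. For $i>1$ I would first record, reading Fig.~\ref{Fig-fixed-CCP} together with Lemma~\ref{L Correctness}, the key local fact: at a diamond location with counter value $m$, the formula $\exists\X\varphi_{i-1}$ holds iff $2^{i-1}\mid m$ when the location is $q_0$ or $q_2$ (whose only $\test$-successor is $t$) and iff $2^{i-1}\nmid m$ when it is $q_1$ or $q_3$ (whose only $\test$-successor is $\overline t$). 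Moreover the right-hand side $q_0\wedge\neg\exists\X q_1$ of the until holds exactly at $(q_0,0)$, since the only diamond edge out of $q_0$ is the decrement $q_0\xrightarrow{-1}q_1$.

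Next I would prove that any path witnessing $\mu_i$ performs an essentially \emph{forced} walk through the diamond. The conjunct $\varphi_\diamond$ confines the path to $\{q_0,q_1,q_2,q_3\}$, every diamond edge decrements the counter by $1$, and the local fact forces a multiple of $2^{i-1}$ at each visit to $q_0,q_2$ and a non-multiple at each visit to $q_1,q_3$. A short argument about the self-loops at $q_1$ and $q_3$ then shows that two consecutive visits to $\{q_0,q_2\}$ differ by exactly $2^{i-1}$: looping cannot stop early (the following $q_2$- or $q_0$-value would fail to be a multiple) and cannot run past the next multiple (a forbidden divisible value would be traversed while at $q_1$ or $q_3$). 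Hence a valid walk meets the multiples of $2^{i-1}$ alternately at $q_0$ and at $q_2$, descending by $2^{i-1}$ per step, and it completes at $(q_0,0)$ exactly when the parity of the multiple at its first $\{q_0,q_2\}$-visit is correct.

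The case $i>1$ is then a case distinction on $n$ driven by the guard $(q_1\vee q_2)$ in $\psi_i$. A single $\X$-step from $(\overline t,n)$ reaches $(q_1,n)$ or $(q_2,n)$, both still at value $n$ via $0$-edges. If $2^{i-1}\nmid n$, only $(q_1,n)$ can begin a valid walk; its first $\{q_0,q_2\}$-visit is $q_2$ at value $\lfloor n/2^{i-1}\rfloor\cdot 2^{i-1}$, and by the previous paragraph the walk reaches $(q_0,0)$ iff $\lfloor n/2^{i-1}\rfloor$ is odd. If $2^{i-1}\mid n$, only $(q_2,n)$ can begin, and again $(q_0,0)$ is reached iff $n/2^{i-1}$ is odd. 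Since $\lfloor n/2^{i-1}\rfloor$ is exactly the number of multiples of $2^{i-1}$ in $[n]$, the fact that $\bit_i(n)=1$ iff this number is odd shows that in each case $\mu_i$ holds at the chosen successor iff $\bit_i(n)=1$; as the other successor never admits a valid walk, this yields $(\overline t,n)\models\psi_i$ iff $\bit_i(n)=1$.

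The step I expect to be the main obstacle is the rigorous verification that the diamond walk is forced --- in particular excluding ``shortcut'' paths that exploit the self-loops differently, and correctly tracking the parity that decides whether $0$ is hit at $q_0$ rather than at $q_2$. Some attention is also required at the boundary value $0$: one must check that the until cannot be completed from $(q_2,0)$ (the only diamond edge leaving $q_2$ is the now-unavailable decrement), so that $n=0$ correctly gives $(\overline t,0)\not\models\psi_i$, in agreement with $\bit_i(0)=0$.
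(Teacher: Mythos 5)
Your proof is correct and follows essentially the paper's intended argument (whose details are deferred to the full version): the base case via Lemma~\ref{L Correctness}, and for $i>1$ the forced decrementing walk through the diamond, where Lemma~\ref{L Correctness} pins down where $\exists\mathsf{X}\varphi_{i-1}$ can hold and the paper's displayed characterization of $\mathrm{bit}_i(n)$ --- parity of the number of multiples of $2^{i-1}$ in $[n]$, i.e.\ of $\lfloor n/2^{i-1}\rfloor$ --- yields the bit. The two delicate points, namely that the guard $(q_1\vee q_2)$ excludes $q_3$ (which would produce the wrong parity for non-multiples of $2^{i-1}$) and the boundary case $n=0$ (where the $0$-labelled edges from $\overline{t}$ into the diamond are unavailable and $(q_1,0),(q_2,0)$ are dead ends), are exactly the ones you flag and resolve correctly.
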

Let us sketch the final step of the reduction from QBF. For this, let us assume
some quantified Boolean formula
$\alpha = Q_kx_k\, Q_{k-1}x_{k-1} \cdots Q_1x_1 : \beta(x_1,\ldots,x_k)$,
where $\beta$ is a Boolean formula over variables
$\{x_1,\ldots,x_k\}$ and $Q_i\in\{\exists,\forall\}$ is a quantifier for
each $i\in[k]$.
Think of each truth assignment
$\vartheta:\{x_1,\ldots,x_k\}\rightarrow\{0,1\}$ to correspond to the
natural number $n(\vartheta)\in[0,2^k-1]$, where
$\bit_i(n(\vartheta))=1$ if and only if $\vartheta(x_i)=1$, for each $i\in[k]$.
Let $\widehat{\beta}$ be the CTL formula that is obtained 
from $\beta$ by replacing each occurrence of
$x_i$ by $\psi_i$, which corresponds to applying Lemma \ref{L Bit}. It remains to describe how we
deal with quantification. 
Think of this as to consecutively incrementing the
counter from state $(\overline{t},0)$ as follows.
First, setting the variable $x_k$ to $1$ will correspond to adding $2^{k-1}$ to the
counter and getting to state $(\overline{t},2^{k-1})$. Setting $x_k$ to $0$ on the other
hand will correspond to adding $0$ to the counter and hence remaining in
state $(\overline{t},0)$. Next, setting $x_{k-1}$ to $1$ corresponds to adding to the
current counter value $2^{k-2}$, whereas setting $x_{k-1}$ to $0$ corresponds to
adding $0$, as expected. These incrementation steps can be  achieved
using the formulas $\varphi_i$ from Lemma~\ref{L Correctness}.
Finally, after setting variable $x_1$ either to $0$ or
$1$, we verify if the CTL formula $\widehat{\beta}$ holds.
Formally, let $\bigcirc_i=\wedge$ if $Q_i=\exists$ and $\bigcirc_i=\,\rightarrow$ if
$Q_i=\forall$ for each $i\in[k]$ 
(recall that $Q_k,\ldots,Q_1$ are the quantifiers of our quantified Boolean
formula $\alpha$). Let
$\theta_1 = Q_1\X \, ((p_0\vee p_1)\bigcirc_1 \exists\X\, \widehat{\beta})$ and for $i\in[2,k]$:
$$
\theta_i = Q_i\X\; 
\left((p_0\vee p_1)\bigcirc_i
\exists\left((p_0\vee
\exists\X\, (\overline{t} \wedge \varphi_{i-1}))\
\U\ (\overline{t}\wedge \neg\varphi_{i-1}\wedge\theta_{i-1})
)\biggl.\right)\right).
$$
Then, it can be show that
$\alpha$ is valid if and only if $(\overline{t},0)\in\sem{\theta_k}_{T(\O)}$.

\begin{theorem} \label{Theo-CTL-expression}
$\CTL$ model checking of the fixed OCN $\O$ from Fig.~\ref{Fig-fixed-CCP} is
$\PSPACE$-hard.
\end{theorem}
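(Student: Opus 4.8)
The plan is to give a logspace (hence polynomial-time) reduction from the $\PSPACE$-complete problem QBF, producing for each quantified Boolean formula $\alpha = Q_k x_k \cdots Q_1 x_1 : \beta$ the single $\CTL$ formula $\theta_k$ over the fixed OCN $\O$, and to establish the correctness claim already announced: $\alpha$ is valid iff $(\overline{t},0)\in\sem{\theta_k}_{T(\O)}$. The purely logical content has been packaged into the two preceding lemmas: Lemma~\ref{L Correctness} detects divisibility of the current counter by $2^{i}$ (in $t$) and non-divisibility (in $\overline{t}$) with one size-$O(i)$ family $(\varphi_i)_i$, and Lemma~\ref{L Bit} reads off single bits, so that substituting $\psi_i$ for $x_i$ in $\beta$ yields $\widehat{\beta}$ with $(\overline{t},n)\models\widehat{\beta}$ iff the assignment $\vartheta$ with $n=n(\vartheta)$ satisfies $\beta$. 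What remains is to encode the quantifier prefix into the counter, which is exactly what the $\theta_i$ are built to do.

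First I would pin down the intended operational meaning of $\theta_i$: evaluated at a state $(\overline{t},n)$ whose counter is a multiple of $2^{i}$ — so that $n$ encodes a partial assignment to the already-quantified $x_k,\ldots,x_{i+1}$ in its bits above position $i$ — the leading $Q_i\X$ selects a successor among the gadget locations $p_0$ (leaving the counter fixed, i.e.\ setting $x_i:=0$) and $p_1$ (initiating an increment by $2^{i-1}$, i.e.\ setting $x_i:=1$). The body $(p_0\vee p_1)\bigcirc_i(\cdots)$ converts this choice into the correct connective, $\bigcirc_i=\wedge$ for $Q_i=\exists$ and $\bigcirc_i=\rightarrow$ for $Q_i=\forall$, so that $Q_i\X$ together with $\bigcirc_i$ realises precisely the alternation $\exists x_i / \forall x_i$ over the two admissible gadget successors. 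The inner $\exists(\cdots\U\cdots)$ then drives the counter along the $p_1$-loop while $\varphi_{i-1}$ monitors divisibility by $2^{i-1}$, handing control back to $\overline{t}$ and evaluating $\theta_{i-1}$ exactly when the counter has reached the next multiple of $2^{i-1}$, which again is a multiple of $2^{i-1}$ and now carries the freshly chosen value of $x_i$ in bit $i$.

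The key step is a downward induction on $i$ from $k$ to $1$ proving the invariant: for every multiple $n$ of $2^{i}$, $(\overline{t},n)\in\sem{\theta_i}_{T(\O)}$ iff $Q_i x_i\cdots Q_1 x_1:\beta$ is true under the partial assignment read from the bits of $n$ above position $i$. The base case $i=1$ closes by evaluating $\widehat{\beta}$ through Lemma~\ref{L Bit}, and the theorem is the instance $i=k$, $n=0$. At each level I would check two complementary directions: \emph{completeness} of the gadget, that every admissible choice of $x_i$ is witnessed by a genuine path of $\O$; and the more delicate \emph{soundness}, that no path can satisfy the inner until without corresponding to incrementing the counter by exactly $0$ or exactly $2^{i-1}$ and returning to $\overline{t}$ at a multiple of $2^{i-1}$.

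I expect this soundness analysis of the inner until to be the main obstacle: one must exclude all cheating runs — overshooting the target multiple of $2^{i-1}$, leaving the $p_0/p_1$ gadget prematurely, or abusing the zero-transitions — by a case analysis over the transition structure of $\O$ combined with the exact characterisations of Lemma~\ref{L Correctness}, where an off-by-one in the stopping condition is the natural pitfall. A secondary, bookkeeping obligation is to confirm logspace computability and a polynomial bound on $|\theta_k|$: since $\theta_i$ contains $\theta_{i-1}$ once and two copies of the $O(i)$-sized $\varphi_{i-1}$, one gets $|\theta_k|=O(k^2)$, while $\widehat{\beta}$ contributes $O(k\cdot|\beta|)$ through the substitutions $x_i\mapsto\psi_i$, so the instance is polynomial. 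Together with the $\PSPACE$ upper bound inherited from the modal $\mu$-calculus, this establishes $\PSPACE$-completeness already for a fixed OCN.
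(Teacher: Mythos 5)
Your proposal is correct and follows essentially the same route as the paper: the same QBF reduction built from Lemmas~\ref{L Correctness} and \ref{L Bit}, the substitution $x_i\mapsto\psi_i$ yielding $\widehat{\beta}$, and the quantifier encoding via the $\theta_i$ over the $p_0/p_1$ gadget, culminating in the equivalence $\alpha$ valid iff $(\overline{t},0)\in\sem{\theta_k}_{T(\O)}$. Your explicit invariant (downward induction on $i$ over multiples of $2^i$, with the soundness analysis ruling out paths that overshoot the next multiple of $2^{i-1}$) is exactly the argument the paper leaves implicit behind ``it can be shown'', and your size and logspace bookkeeping matches the paper's $|\varphi_i|,|\psi_i|\in O(i)$ accounting.
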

Note that the constructed $\CTL$ formula has leftward until
depth that depends on the size of $\alpha$.
By Thm.~\ref{CTL upper bound_0} this cannot be avoided unless $\P=\PSPACE$.
Observe that in order to express divisibility by powers of two, 
our $\CTL$ formulas $(\varphi_i)_{i\geq 0}$ 
have linearly growing leftward until depth.

\section{Tools from complexity theory}  \label{S Tools}

For Sec.~\ref{S Data} and \ref{further applications} we need some concepts
from complexity theory. By $\P^{\NP[\log]}$ we denote the class of all problems that can be solved on a
polynomially time bounded deterministic Turing machines which can have access to
an $\NP$-oracle only logarithmically many times, and by $\P^\NP$ the
corresponding class without the restriction to logarithmically many queries.
Let us briefly recall the definition of the circuit complexity class $\NC^1$,
more details can be found in \cite{Vol99}.
We consider Boolean circuits $C = C(x_1,\ldots,x_n)$ built up from 
AND- and OR-gates. Each input gate is labeled with a variable $x_i$
or a negated variable $\neg x_i$. The output gates are linearly
ordered. Such a circuit computes a function $f_C : \{0,1\}^n \to \{0,1\}^m$,
where $m$ is the number of output gates, in the obvious
way. The \emph{fan-in of a circuit} is the maximal number of
incoming wires of a gate in the circuit.
The \emph{depth of a circuit} is the number of gates along a longest path
from an input gate to an output gate.
A \emph{logspace-uniform $\NC^1$-circuit family} 
is a sequence $(C_n)_{n \geq 1}$ of Boolean circuits such 
that for some polynomial $p(n)$ and constant $c$:
(i) $C_n$ contains at most $p(n)$ many gates,
(ii) the depth of $C_n$ is at most $c \cdot \log(n)$,
(iii) the fan-in of $C_n$ is at most $2$, 
(iv) for each $m$ there is at most one circuit in $(C_n)_{n \geq 1}$
with exactly $m$ input gates, and
(v) there exists a logspace transducer
that computes on input $1^n$ a representation (e.g. as a 
node-labeled graph) of the circuit $C_n$.
Such a circuit family computes a 
partial mapping on $\{0,1\}^*$ in the 
obvious way (note that we do not require to have for every $n \geq 0$ 
a circuit with exactly $n$ input gates in the family, therefore
the computed mapping is in general only partially defined).
In the literature on circuit complexity one can find more restrictive notions
of uniformity, see e.g. \cite{Vol99}, but logspace uniformity 
suffices for our purposes. In fact, polynomial time uniformity
suffices for proving our lower bounds w.r.t. 
polynomial time reductions.

For $m \geq 1$ and $0 \leq M \leq 2^m-1$ 
let $\BIN_m(M) = \bit_m(M) \cdots \bit_1(M) \in \{0,1\}^m$ 
denote the $m$-bit binary representation of $M$.
Let $p_i$ denote the $i^{\text{th}}$ prime number.
It is well-known that the $i^{\text{th}}$ prime 
requires $O(\log(i))$ bits in its binary representation.
For a number $0 \leq M < \prod_{i=1}^m p_i$ we define the 
{\em Chinese remainder representation} $\CRR_m(M)$ as the 
Boolean tuple 
$\CRR_m(M) = (x_{i,r})_{i \in [m], 0 \leq r < p_i}$ with
$x_{i,r} = 1$ if $M \text{ mod } p_i = r$ and $x_{i,r}=0$ else.
By the following theorem, one can transform a Chinese remainder representation very
efficiently into binary representation.

\begin{theorem}[{\cite{ChDaLi01}}] \label{theorem hesse und co}
There is a logspace-uniform $\NC^1$-circuit family 
$(B_m( (x_{i,r})_{i \in [m], 0 \leq r < p_i} ))_{m \geq 1}$  such that 
for every $m \geq 1$, $B_m$ has $m$ output gates and
for every
$0 \leq M < \prod_{i=1}^m p_i$ we have that $B_m( \CRR_m(M) ) =
\BIN_m(M \text{ mod } 2^m)$.
\end{theorem}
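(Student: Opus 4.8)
The plan is to implement the classical Chinese-remainder reconstruction as a log-depth circuit, pushing all genuinely hard arithmetic onto iterated multiplication and iterated addition. Write $P=\prod_{i=1}^m p_i$, $P_i=P/p_i$, and $c_i=P_i^{-1}\bmod p_i$. Each prime $p_i$ is only $O(\log m)$ bits long and is logspace-computable (enumerate integers and trial-divide, all within $O(\log m)$ space), so I would hard-wire the $p_i$ into $B_m$; the small inverses $c_i$ are likewise logspace-computable, since $P_i\bmod p_i=\prod_{j\ne i}p_j\bmod p_i$ is an iterated product kept inside a single $O(\log m)$-bit register and the inverse is then found by brute force over $[0,p_i-1]$. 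On input $\CRR_m(M)=(x_{i,r})$ the circuit first reads off $r_i=M\bmod p_i$ in binary (selecting the unique hot bit $x_{i,r}$, trivially in $\NC^1$) and forms $a_i=r_i c_i\bmod p_i$, an operation on $O(\log m)$-bit numbers. The Chinese remainder theorem then yields $M\equiv\Sigma\pmod P$ with $\Sigma=\sum_{i=1}^m a_iP_i$, so $M=\Sigma-\alpha P$ for a unique integer $\alpha$, and from $a_i\le p_i-1$ one gets $\Sigma\le\sum_i(P-P_i)<mP$, i.e. $\alpha\in[0,m)$.

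The two remaining ingredients are the rank $\alpha$ and the final reduction modulo $2^m$. For the rank I would use the Beame--Cook--Hoover trick: $\Sigma/P=\sum_{i=1}^m a_i/p_i=\alpha+M/P$, so $\alpha=\lfloor\sum_i a_i/p_i\rfloor$; approximating each rational $a_i/p_i$ by a fixed-point number with polynomially many bits and summing the $m$ approximations is an iterated addition, which is in logspace-uniform $\NC^1$ \cite{Vol99}. Having $\alpha$ (a number below $m$) together with the residues $P_i\bmod 2^m$ and $P\bmod 2^m$, the output is read off from $M\bmod 2^m=\bigl(\sum_i a_i(P_i\bmod 2^m)-\alpha(P\bmod 2^m)\bigr)\bmod 2^m$, once more an iterated addition of short-by-$m$-bit products followed by truncation to the low $m$ bits. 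All of these steps live in logspace-uniform $\NC^1$.

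The main obstacle is producing the constants $P_i\bmod 2^m$ and $P\bmod 2^m$: each is the residue of an $m$-fold product of the hard-wired small primes, so the whole construction rests on the fact that iterated multiplication (here modulo $2^m$) can be carried out by a logspace-uniform family of bounded-fan-in circuits of depth $O(\log m)$. Rather than hard-wiring these $m$-bit constants --- which a logspace transducer cannot in general generate bit-by-bit --- I would let $B_m$ itself compute them from the hard-wired primes using such an iterated-multiplication subcircuit. Establishing that iterated product sits in logspace-uniform $\NC^1$ is exactly the technical core of the cited result: the standard route reduces a product to a sum of discrete logarithms taken modulo suitable prime powers (index calculus), so that multiplication in the value domain becomes addition in the exponent domain, and then invokes the $\NC^1$ bound for iterated addition. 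A secondary subtlety is that the rank approximation degenerates when $M/P$ is within $1/P$ of an integer; this is handled by carrying enough (still polynomially many) bits of precision together with a careful one-sided error analysis, so that $\lfloor\cdot\rfloor$ is computed correctly in all cases. Once these two points are in place, correctness is immediate from the CRT and the bound $\alpha<m$.
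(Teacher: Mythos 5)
Your skeleton is the classical Beame--Cook--Hoover reconstruction, and most of its individual steps are sound: decoding $r_i$ and forming $a_i=c_ir_i\bmod p_i$, the identity $M=\sum_i a_iP_i-\alpha P$ with $\alpha<m$, computing the rank $\alpha=\lfloor\sum_i a_i/p_i\rfloor$ by truncated fixed-point approximations plus iterated addition (your one-sided error analysis does go through: for $M\geq 1$ one has $M/P\geq 1/P$, and polynomially many bits of precision suffice since $\log P = O(m\log m)$), and the final reduction modulo $2^m$ given the constants. Note that the paper does not prove this theorem at all --- it imports it from \cite{ChDaLi01} --- so the only question is whether your argument is self-contained. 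It is not.

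The gap is exactly the step you flag as the ``technical core'': producing the binary strings $P\bmod 2^m$ and $P_i\bmod 2^m$ inside a logspace-uniform $\NC^1$ family. Your justification --- index calculus turns iterated multiplication into iterated addition of discrete logarithms --- is circular. Discrete-logarithm tables are available only for moduli of polynomial magnitude, so that route (or, even more simply, a running product kept in an $O(\log m)$-bit register) only yields the residues of $P_i$ modulo small primes, i.e.\ the \emph{Chinese remainder representation} of $P_i$. Extracting from those residues the $m$-bit binary string $P_i \bmod 2^m$ is precisely a CRR-to-binary conversion --- the theorem you are trying to prove. So your ``iterated-multiplication subcircuit'' must contain, as a subroutine, the very circuit being constructed. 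Nor can the constants be hard-wired by the constructor: a logspace transducer is not known (absent this theorem) to be able to write down the bits of $\prod_j p_j \bmod 2^m$; that this is possible at all is a \emph{consequence} of the theorem (since logspace-uniform $\NC^1\subseteq\mathsf{L}$), not an ingredient one may assume in its proof. This circle is exactly why the 1986 construction of Beame, Cook and Hoover gives only $\P$-uniformity --- there the constants are hard-wired by a polynomial-time constructor, as the paper itself remarks when citing \cite{BCH86} --- and why the logspace-uniform version remained open until \cite{ChDaLi01}, whose proof requires genuinely different machinery for the base conversion. As it stands, your argument establishes the previously known $\P$-uniform statement, not the stated theorem.
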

By \cite{HeAlBa02}, we could replace logspace-uniform $\NC^1$-circuits
in Thm.~\ref{theorem hesse und co} even by $\mathsf{DLOGTIME}$-uniform
$\TC^0$-circuits. The existence of a $\mathsf{P}$-uniform $\NC^1$-circuit family
for converting from Chinese remainder representation to binary representation
was already shown in \cite{BCH86}.
Usually the Chinese remainder representation of $M$ is the tuple $(r_i)_{i \in [m]}$,
where $r_i = M  \text{ mod } p_i$. Since the primes $p_i$ will be always given in unary
notation, there is no essential difference between this representation and our 
Chinese remainder representation. The latter is more suitable for our purpose. 

The following definition of $\NC^1$-serializability is a variant 
of the more classical notion of serializability \cite{CaFu91,HLSVW93},
which fits our purpose better.
A language $L$ is {\em $\NC^1$-serializable}
if there exists an NFA $A$ over the alphabet
$\{0,1\}$, a polynomial $p(n)$, and a logspace-uniform
$\NC^1$-circuit family $(C_n)_{n \geq 0}$, where
$C_n$ has exactly $n+p(n)$ many inputs and one output, such that for every 
$x \in \{0,1\}^n$ we have 
$ x \in L$ if and only if $C_n(x,0^{p(n)}) \cdots C_n(x,1^{p(n)}) \in L(A)$,
where ``$\cdots$'' refers to the lexicographic order on $\{0,1\}^{p(n)}$.
With this definition, it can be shown that all languages in $\PSPACE$ are
$\NC^1$-serializable. A proof can be found in the appendix of \cite{GoLo09};
it is just a slight adaptation of the proofs from \cite{CaFu91,HLSVW93}.

\section{Data complexity for CTL is hard for PSPACE}  \label{S Data}

In this section, we prove that also the data complexity of $\CTL$ over
OCNs is hard for $\PSPACE$ and therefore $\PSPACE$-complete
by the known upper bounds for the modal $\mu$-calculus \cite{Serr06}.
Let us fix the set of propositions $\Pmc=\{\alpha,\beta,\gamma\}$ 
for this section.
In the following, w.l.o.g. we allow in $\delta_0$ (resp. in $\delta_{>0}$) 
transitions of the kind $(q,k,q')$, where $k\in\N$ (resp. $k\in\Z$) is given
in unary representation with the expected intuitive meaning. 

\begin{proposition} \label{prop main}
For the fixed $\EF$ formula $\varphi = (\alpha  \to \exists\mathsf{X} (\beta
\wedge \EF ( \neg \exists \mathsf{X} \gamma)))$ 
the following problem can be solved with a logspace transducer:

\noindent
INPUT: A list $p_1,\ldots,p_m$ of the first $m$ consecutive (unary encoded) prime numbers and
a Boolean formula $F = F((x_{i,r})_{i \in [m], 0 \leq r < p_i})$

\noindent
OUTPUT:  An OCN $\Omc(F)$ with distinguished control locations $\inp$ and $\out$,
such that for every number $0 \leq M < \prod_{i=1}^m p_i$ 
we have that $F(\CRR_m(M))=1$ if and only if 
there exists a $\sem{\varphi}_{T(\Omc(F))}$-path from $(\inp,M)$ to $(\out,M)$
in the transition system $T(\Omc(F))$.
\end{proposition}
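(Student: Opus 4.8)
The plan is to build $\Omc(F)$ by a syntax-directed translation of $F$ in which the fixed formula $\varphi$ acts as a \emph{guard}: a state carries the proposition $\alpha$ exactly at the ``checkpoints'' that test a single Chinese-remainder atom, and since $\varphi = (\alpha \to \exists\X(\beta \wedge \EF(\neg\exists\X\gamma)))$ is vacuously true at every non-$\alpha$ state, a $\sem{\varphi}_{T(\Omc(F))}$-path may pass a checkpoint only when its guard fires. First I would normalise $F$: push all negations to the leaves (NNF) and then eliminate negative literals using the identity $\neg x_{i,r} \equiv \bigvee_{0 \le r' < p_i,\, r' \ne r} x_{i,r'}$, which is valid on genuine inputs $\CRR_m(M)$ because for each $i$ exactly one residue class is selected; as the $p_i$ are given in unary this stays polynomial and is logspace-computable. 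After this step $F$ is monotone over the positive atoms $x_{i,r}$, so I may realise $\wedge$ by \emph{series} composition and $\vee$ by \emph{parallel} composition of gadgets, with $\inp$ and $\out$ as the global source and sink. The skeleton connecting the gadgets uses only weight-$0$ transitions, so the main path keeps the counter value $M$ from $(\inp,M)$ to $(\out,M)$, and I arrange that the test regions attached to checkpoints admit no route back to $\out$, forcing the main path to remain on the skeleton.

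The crux is the atom gadget for a positive literal $x_{i,r}$, which must let the fixed guard decide ``$M \equiv r \bmod p_i$'' without a genuine zero test. Here I would exploit that $\Omc(F)$ is an OCN: the decrement transition of a $-1$-cycle lives in $\delta_{>0}$ and is therefore unavailable at counter $0$, so reaching counter $0$ can be detected by the disappearance of a successor. Concretely, from the $\alpha$-marked checkpoint I add a weight-$0$ edge to a $\beta$-marked entry at position $r$ of a cycle $c_0 \to c_{p_i-1} \to \cdots \to c_1 \to c_0$ of $p_i$ control locations joined by $-1$-edges. Starting at $(c_r,M)$ this dive is deterministic and, after exactly $M$ steps, halts at $(c_{(r-M)\bmod p_i},0)$. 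Marking every cycle location with $\gamma$ makes $\exists\X\gamma$ true at all dive states of positive counter, so $\neg\exists\X\gamma$ can hold only once the counter reaches $0$; I then give the designated location $c_0$ no outgoing zero-transition (a $\gamma$-free dead end), while every other $c_j$ gets a zero-transition into a $\gamma$-looping trap from which no $\gamma$-free state is reachable. Thus $\EF(\neg\exists\X\gamma)$ holds at the $\beta$-entry iff the dive halts at $c_0$, i.e. iff $(r-M)\equiv 0 \bmod p_i$, i.e. iff $M \equiv r \bmod p_i$; and then $\varphi$ holds at the checkpoint precisely when the literal is true at counter value $M$.

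Finally I would verify, by induction on the NNF structure, the invariant that a $\sem{\varphi}_{T(\Omc(F))}$-path from a gadget's source to its sink (both at counter $M$) exists iff the corresponding subformula evaluates to $1$ under $\CRR_m(M)$: literals are handled by the gadget above, series composition forces both conjuncts to be passable, and parallel composition offers the two disjuncts as alternative routes. Specialising to the whole formula gives the claimed equivalence between $F(\CRR_m(M))=1$ and the existence of a $\varphi$-path from $(\inp,M)$ to $(\out,M)$, and the entire output (skeleton plus polynomially many cycles of unary-bounded length) is produced by a logspace transducer via a syntax-directed translation of $F$. I expect the main obstacle to be exactly the correctness of this single uniform atom gadget: making the \emph{fixed} guard $\varphi$ carry out the residue test through side branches only, so that the $\EF$-exploration never disturbs the main path's counter and no position of the dive other than the intended one can spuriously expose a $\gamma$-free state.
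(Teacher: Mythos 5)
Your proposal is correct and takes essentially the same approach as the paper's proof: the same normalization of $F$ (negation normal form, then replacing $\neg x_{i,r}$ by $\bigvee_{0 \leq r' < p_i, r' \neq r} x_{i,r'}$), the same series/parallel skeleton for $\wedge$/$\vee$ with $\alpha$ marking only the literal checkpoints so that $\varphi$ is vacuous elsewhere, and the same residue test that exploits the unavailability of $\delta_{>0}$-transitions at counter zero to make $\neg\exists\X\gamma$ detect counter value $0$. Your cycle of $p_i$ locations with $-1$ edges entered at position $r$ is just the unary expansion of the paper's gadget, which uses (unary-weighted) transitions $\inp(x_{i,r}) \xrightarrow{-r} \div(p_i)$ and $\div(p_i) \xrightarrow{-p_i} \div(p_i)$ together with a $\gamma$-labelled trap location $\perp$.
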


\begin{proof}
W.l.o.g., negations occur in $F$ only
in front of variables. Then additionally, a negated variable $\neg x_{i,r}$ 
can be replaced by the disjunction $\bigvee \{ x_{i,k} \mid 0 \leq k < p_i, r
\neq k \}$. This can be done in logspace, since the primes $p_i$ 
are given in unary. Thus, we can assume that $F$ does not contain
negations.

The idea is to traverse the Boolean formula $F$
with the  OCN $\Omc(F)$ in a depth first manner. Each time
a variable $x_{i,r}$ is  seen, the OCN may also enter another branch, where it is 
checked, whether the current counter value is congruent 
$r$ modulo $p_i$. Let $\Omc(F) = (Q, \{Q_\alpha, Q_\beta, Q_\gamma\},
\delta_0, \delta_{>0})$, where
$Q = \{ \inp(G), \out(G) \mid G \text{ is a subformula of } F \} \cup 
\{ \div(p_1), \ldots,\div(p_m), \perp \}$, 
$Q_\alpha = \{ \inp(x_{i,r}) \mid i \in [m], 0 \leq r < p_i \}$,
$Q_\beta = \{ \div(p_1), \ldots,\div(p_m) \}$, and
$Q_\gamma =  \{ \perp \}$.
We set $\inp = \inp(F)$ and $\out = \out(F)$.
Let us now define the transition sets $\delta_0$ and $\delta_{>0}$.
For every subformula $G_1 \wedge G_2$ or $G_1 \vee G_2$ of $F$
we add the following transitions to $\delta_0$ and $\delta_{>0}$:
\begin{gather*}
\inp(G_1 \wedge G_2) \xrightarrow{0} \inp(G_1), \ \out(G_1) \xrightarrow{0} \inp(G_2),
\ \out(G_2) \xrightarrow{0} \out(G_1 \wedge G_2) \\
\inp(G_1 \vee G_2) \xrightarrow{0} \inp(G_i), \ \out(G_i) \xrightarrow{0}
\out(G_1 \vee G_2) \text{ for all } i \in \{1,2\}
\end{gather*}
For every variable $x_{i,r}$
we add to $\delta_0$ and $\delta_{>0}$
the transition
$\inp(x_{i,r}) \xrightarrow{0} \out(x_{i,r})$.
Moreover, we add to $\delta_{>0}$ the transitions
$\inp(x_{i,r})  \xrightarrow{-r} \div(p_i)$.
The transition $\inp(x_{i,0}) \xrightarrow{0} \div(p_i)$ 
is also added to $\delta_0$.
For the control locations $\div(p_i)$ we add to $\delta_{>0}$ 
the transitions  $\div(p_i) \xrightarrow{-p_i} \div(p_i)$ and
$\div(p_i) \xrightarrow{-1} \perp$.
This concludes the description of the OCN $\Omc(F)$.
Correctness of the construction can be easily checked by
induction on the structure of the formula $F$.
\end{proof}
We are now ready to prove $\PSPACE$-hardness of the data complexity.

\begin{theorem} \label{theo ctl data}
There exists a fixed $\CTL$ formula 
of the form $\exists \varphi_1 \U \varphi_2$, where
$\varphi_1$ and $\varphi_2$ are $\EF$ formulas, for which it 
is $\PSPACE$-complete to decide $(T(\Omc), (q,0)) \models \exists \varphi_1 \U
\varphi_2$ for a given OCN $\Omc$ and a control location $q$ of $\Omc$.
\end{theorem}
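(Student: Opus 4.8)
The plan is to prove $\PSPACE$-hardness by combining the three tools assembled in Section~\ref{S Tools} — the serializability of $\PSPACE$, the $\NC^1$-circuit family converting Chinese remainder to binary representation (Thm.~\ref{theorem hesse und co}), and the gadget construction of Prop.~\ref{prop main} — into a single logspace reduction. I start from an arbitrary language $L \in \PSPACE$ and an input $x \in \{0,1\}^n$. Since $L$ is $\NC^1$-serializable, there is an NFA $A$, a polynomial $p(n)$, and a logspace-uniform $\NC^1$-circuit family $(C_n)$ such that $x \in L$ iff the string $C_n(x,0^{p(n)}) \cdots C_n(x,1^{p(n)}) \in L(A)$, the concatenation running over all $2^{p(n)}$ bit-strings $y \in \{0,1\}^{p(n)}$ in lexicographic order. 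The heart of the reduction is to realise this exponentially long bit-string, together with the run of $A$ on it, inside a single OCN whose counter ranges over $M \in [0, 2^{p(n)}-1]$, where $M$ is identified with the serialization index $y = \BIN_{p(n)}(M)$.

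The key composition step is to express the single bit $C_n(x,y)$ as a divisibility/Chinese-remainder property of the counter value $M$. Fixing $x$, the circuit $C_n(x,\cdot)$ becomes an $\NC^1$-circuit in the $p(n)$ variables $y$; composing it with the $\NC^1$-conversion circuit $B_{p(n)}$ of Thm.~\ref{theorem hesse und co} yields a single logspace-uniform $\NC^1$-circuit computing, from the Chinese remainder representation $\CRR_{p(n)}(M)$, the bit $C_n(x, \BIN_{p(n)}(M \bmod 2^{p(n)}))$. First I would convert this $\NC^1$-circuit into an equivalent polynomial-size Boolean formula $F$ (the standard depth-to-formula unwinding, feasible in logspace because the depth is $O(\log)$), so that $F(\CRR_{p(n)}(M)) = C_n(x,y)$. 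Feeding $F$ into Prop.~\ref{prop main} produces an OCN $\Omc(F)$ in which, for the fixed $\EF$ formula $\varphi$, there is a $\sem{\varphi}$-path from $(\inp,M)$ to $(\out,M)$ precisely when the serialized bit is $1$; crucially the counter value $M$ is preserved from $\inp$ to $\out$, so the bit can be read off without disturbing the index.

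Next I would assemble the OCN for the whole reduction. I chain $2^{p(n)}$ conceptual copies of the bit-gadget by having the machine read off $C_n(x, \BIN(M))$ at counter value $M$, then increment the counter by one to move to index $M+1$, simulating a step of $A$'s transition relation on the freshly read bit between consecutive increments; the NFA $A$ is hard-wired into the finite control, with its guessed run threaded through the control locations. The fixed $\CTL$ formula of the form $\exists \varphi_1 \U \varphi_2$ then asserts the existence of one long path from counter value $0$ to counter value $2^{p(n)}-1$ along which (via $\varphi_1$, which embeds the $\EF$ bit-test $\varphi$) each bit is correctly produced and $A$ stays in a legal run, terminating (via $\varphi_2$) in an accepting state of $A$. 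Because Prop.~\ref{prop main} uses one fixed $\EF$ formula and the NFA $A$ depends only on $L$ (not on $x$), the entire $\CTL$ formula is fixed, which is exactly what the theorem demands: only the OCN $\Omc$ varies with the input $x$.

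The main obstacle I anticipate is the bookkeeping that glues the single-bit gadget into a global path whose existence is equivalent to acceptance by $A$: one must ensure that the $\exists\U$ witness is forced to increment the counter by exactly one at each step (never skipping or repeating an index $M$), that the bit read at each $M$ genuinely matches $C_n(x,\BIN(M))$ rather than being spuriously guessable, and that $A$'s nondeterministic run is threaded consistently — all while keeping $\varphi_1,\varphi_2$ as \emph{fixed} $\EF$ formulas independent of $n$. Concretely, the challenge is that the counter serves double duty (it is both the serialization index and the value tested for divisibility inside $\Omc(F)$), so the increment machinery and the divisibility gadgets must share the counter without interfering; I would resolve this by a careful product of $\Omc(F)$ with a small incrementer and the control of $A$, ensuring the preservation of $M$ across $\inp \to \out$ guaranteed by Prop.~\ref{prop main} is exactly what lets the index advance deterministically. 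Once this plumbing is in place, $\PSPACE$-completeness follows since the matching upper bound is inherited from the modal $\mu$-calculus \cite{Serr06}.
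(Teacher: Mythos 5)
Your overall strategy coincides with the paper's: fix $x$, compose the serializing circuit $C_n(x,\cdot)$ with the $\CRR$-to-binary converter of Thm.~\ref{theorem hesse und co}, unfold the resulting $O(\log n)$-depth circuit into a polynomial-size Boolean formula $F$, turn $F$ into counter gadgets via Prop.~\ref{prop main}, attach one gadget to each transition of the NFA $A$ with a $+1$ increment on exit, and take $\varphi_1$ to be the fixed $\EF$ formula $\varphi$ of Prop.~\ref{prop main}. Two pieces of plumbing are missing, and one of them is a genuine gap. The minor one: Prop.~\ref{prop main} certifies only that a bit equals $1$ (a $\sem{\varphi}$-path from $(\inp,M)$ to $(\out,M)$ exists iff $F(\CRR_m(M))=1$), so for the $0$-labeled transitions of $A$ you must build the gadget from the negated formula, i.e.\ use $\Omc(\neg F)$ with negations pushed to the leaves; ``reading off'' a $0$-bit is not otherwise expressible by path existence.

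The genuine gap is the end-of-word check. Serializability requires $A$ to accept the word of length exactly $2^m$ (where $m=p(n)$); acceptance of a proper prefix proves nothing. In your plan $\varphi_2$ asserts only ``accepting state of $A$'', and nothing in the OCN or the formula forces the witnessing path to have performed exactly $2^m$ increments: a fixed $\EF$ formula cannot say ``the counter equals $2^m$''. Overshooting is harmful too: since $B_m(\CRR_m(M)) = \BIN_m(M \bmod 2^m)$, your gadgets actually compute the bit $C(\BIN_m(M \bmod 2^m))$ for every $M < \prod_{i} p_i$, so an unconstrained path simulates $A$ on the periodic word $w\,w\,w \cdots$ and could accept at a wrong length. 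The paper closes this hole with one more application of Prop.~\ref{prop main}: let $G = \bigwedge_{i \in [m]} x_{i,r_i}$ with $r_i = 2^m \bmod p_i$, a Chinese-remainder test for ``counter $=2^m$''. Each transition gadget is built from $F \wedge \neg G$ (resp.\ $\neg F \wedge \neg G$), which blocks every path at counter value exactly $2^m$ and hence confines the simulation to $[0,2^m]$; and from each accepting state of $A$ the OCN enters a copy of $\Omc(G)$ whose $\out$ location carries a fresh proposition $\rho$, reachable by a $\sem{\varphi}$-path iff the counter equals $2^m$. The fixed formula is then $\exists\, \varphi \,\U\, \rho$, so $\varphi_2$ is just an atomic proposition rather than ``accepting state''. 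Without this $G$-machinery (or an equivalent exact-length test) the reduction is unsound. By contrast, the obstacles you single out --- forcing increments of exactly one and threading the run of $A$ consistently --- take care of themselves, since the gadgets of Prop.~\ref{prop main} preserve the counter and each NFA step is a single $+1$ edge in the finite control.
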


\begin{proof}
Let us take an arbitrary language $L$ in $\PSPACE$. 
Recall from Sec.~\ref{S Tools}
that $\PSPACE$ is $\NC^1$-serializable.
Thus, there exists an NFA $A=(S,\{0,1\},\delta,s_0,S_f)$ over the alphabet
$\{0,1\}$, a polynomial $p(n)$, and a logspace-uniform
$\NC^1$-circuit family $(C_n)_{n \geq 0}$, where
$C_n$ has $n+p(n)$ many inputs and one output, such that for every 
$x \in \{0,1\}^n$ we have:
\begin{equation} \label{eq C_n}
x \in L \ \Longleftrightarrow \ C_n(x,0^{p(n)}) \cdots C_n(x,1^{p(n)}) \in L(A),
\end{equation}
where ``$\cdots$'' refers to the lexicographic order on $\{0,1\}^{p(n)}$.
Fix an input $x \in \{0,1\}^n$. 
Our reduction can be split into the following five steps:

\medskip

\noindent
{\em Step 1.}
Construct in logspace the circuit $C_n$.
Fix the the first $n$ inputs of $C_n$ to the bits in $x$,
and denote the resulting circuit by $C$; it has only
$m = p(n)$ many inputs. Then, (\ref{eq C_n}) can 
be written as
\begin{equation} \label{eq C}
x \in L \ \Longleftrightarrow \ \prod_{M=0}^{2^m-1} C(\BIN_m(M)) \in L(A).
\end{equation}
{\em Step 2.} Compute the first $m$ consecutive primes
$p_1, \ldots, p_m$. This is possible in logspace, see e.g. 
\cite{ChDaLi01}. Every $p_i$ is bounded polynomially in $n$. 
Hence, every $p_i$ can be written down in unary notation.
Note that $\prod_{i=1}^m p_i > 2^m$ (if $m > 1$).

\medskip

\noindent
{\em Step 3.} Compute in logspace
the circuit $B = B_m((x_{i,r})_{i \in [m], 0 \leq r < p_i} )$
from Thm.~\ref{theorem hesse und co}.
Thus, $B$ is a Boolean circuit of fan-in 2 and
depth $O(\log(m)) = O(\log(n))$ with $m$ output gates and
$B(\CRR_m(M)) = \BIN_m(M \text{ mod } 2^m)$ 
for every $0 \leq M < \prod_{i=1}^m p_i$.

\medskip

\noindent
{\em Step 4.}
Now we compose the circuits $B$ and $C$: For every $i\in[m]$, connect the $i^{\text{th}}$ input of the circuit
$C(x_1,\ldots,x_m)$ with the $i^{\text{th}}$ output of the circuit $B$.
The result is a circuit with fan-in 2 and depth $O(\log(n))$.
In logspace, we can unfold this circuit into a Boolean formula 
$F = F((x_{i,r})_{i  \in [m], 0 \leq r < p_i})$. The resulting formula (or tree) has the same
depth as the circuit, i.e., depth $O(\log(n))$ and every tree node has at most
2 children. Hence, $F$ has polynomial size.
For every $0 \leq M < 2^m$ we have
$F(\CRR_m(M)) = C(\BIN_m(M))$ and equivalence (\ref{eq C}) can be written
as
\begin{equation} \label{gl F}
x \in L \ \Longleftrightarrow \ \prod_{M=0}^{2^m-1} F(\CRR_m(M)) \in L(A).
\end{equation}
{\em Step 5.} 
We now apply our construction from Prop.~\ref{prop main} to the formula $F$.
More precisely, let $G$ be the Boolean formula $\bigwedge_{i\in [m]} x_{i,r_i}$
where $r_i = 2^m \text{ mod } p_i$ for $i\in[m]$ (these remainders
can be computed in logspace).
For every $1$-labeled transition $\tau \in \delta$ of the NFA $A$ let $\Omc(\tau)$
be a copy of the OCN $\Omc(F \wedge \neg G)$. 
For every $0$-labeled transition $\tau \in \delta$ let
$\Omc(\tau)$ be a copy of the OCN $\Omc(\neg F\wedge \neg G)$.
In both cases we write $\Omc(\tau)$
as $(Q(\tau),\{Q_\alpha(\tau),Q_\beta(\tau),Q_\gamma(\tau)\}, \delta_0(\tau),\delta_{>0}(\tau))$.
Denote with $\inp(\tau)$ 
(resp. $\out(\tau)$) the control location of this copy that corresponds to
$\inp$ (resp. $\out$) in $\Omc(F)$. Hence, for every $b$-labeled transition
$\tau \in \delta$ ($b \in \{0,1\}$) and every $0 \leq M < \prod_{i=1}^m p_i$
there exists a $\sem{\varphi}_{T(\Omc(\tau))}$-path  
($\varphi$ is from Prop.~\ref{prop main}) from
$(\inp(\tau),M)$ to $(\out(\tau),M)$ 
if and only if $F(\CRR_m(M))=b$ and $M \neq 2^m$.

We now define an OCN $\Omc = (Q, \{Q_\alpha, Q_\beta,Q_\gamma\}, \delta_0, \delta_{>0})$ as follows:
We take the disjoint union of all the OCNs $\Omc(\tau)$ for $\tau \in \delta$.
Moreover, every state $s \in S$ of the NFA $A$ becomes
a control location of $\Omc$, i.e. 
$Q= S \cup \bigcup_{\tau \in \delta} Q(\tau)$ and
$Q_p= \bigcup_{\tau \in \delta} Q_p(\tau)$  for each $p \in
\{\alpha,\beta,\gamma\}$.
We add to $\delta_0$ and $\delta_{>0}$ for
every $\tau = (s,b,t) \in \delta$ the transitions
$s \xrightarrow{0} \inp(\tau)$ and $\out(\tau) \xrightarrow{1} t$.
Then, by Prop.~\ref{prop main} and (\ref{gl F}) we have
$x \in L$ if and only if there exists a 
$\sem{\varphi}_{T(\Omc)}$-path in $T(\Omc)$ from
$(s_0,0)$ to $(s,2^m)$ for some $s \in S_f$. 
Also note that there is no $\sem{\varphi}_{T(\Omc)}$-path in $T(\Omc)$
from $(s_0,0)$ to some configuration $(s,M)$ with $s \in S$ and 
$M > 2^m$. 
It remains to add to $\Omc$ some structure that enables
$\Omc$ to check that the counter has reached the value $2^m$.
For this, use again Prop.~\ref{prop main} to construct the OCN $\Omc(G)$ ($G$ is from above) 
and add it disjointly to $\Omc$. Moreover, add
to $\delta_{>0}$ and $\delta_0$ the transitions 
$s \xrightarrow{0} \inp$ for all $s \in S_f$, where $\inp$ is the $\inp$ control location  of $\Omc(G)$.
Finally, introduce a new proposition $\rho$ and set
$Q_\rho = \{\out\}$, where $\out$ is the $\out$ control location of $\Omc(G)$.
By putting $q=s_0$ we obtain:
$x \in L$ if and only if $(T(\Omc), (q,0)) \models \exists  (\varphi \ \mathsf{U} \ \rho)$,
where $\varphi$ is from Prop.~\ref{prop main}.
This concludes the proof of the theorem.
\end{proof}
By slightly modifying the proof of Thm.~\ref{theo ctl data}, one can also prove
that the fixed CTL formula can chosen to be of the form 
$\exists\G\psi$, where $\psi$ is an $\EF$ formula.

\section{Two further applications: EF and one-counter Markov decision processes} \label{further applications}

In this section, we present two further applications 
of Thm.~\ref{theorem hesse und co} to OCPs.
First, we state that the combined complexity for 
$\EF$ over OCNs is hard for $\mathsf{P}^{\mathsf{NP}}$. 
For formulas represented succinctly by directed acyclic graphs
this was already shown in \cite{GoMaTo09}. The point here is that we use
the standard tree representation for formulas. 

\begin{theorem}{\label{T EF}}
It is $\mathsf{P}^{\mathsf{NP}}$-hard (and hence
$\mathsf{P}^{\mathsf{NP}}$-complete by \cite{GoMaTo09})
to check $(T(\Omc), (q_0,0)) \models \varphi$ for given
OCN $\Omc$, state $q_0$ of $\Omc$, and $\EF$ formula $\varphi$.
\end{theorem}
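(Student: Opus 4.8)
The plan is to prove $\mathsf{P}^{\mathsf{NP}}$-hardness by a logspace reduction that closely mirrors the serialization machinery of Thm.~\ref{theo ctl data}, but replaces the NFA $A$ (which recognizes the "global" acceptance condition of a $\PSPACE$ computation) by the decision rule appropriate to $\mathsf{P}^{\mathsf{NP}}$. First I would recall the standard characterization of $\mathsf{P}^{\mathsf{NP}}$ in terms of \emph{census}/threshold computations: a canonical $\mathsf{P}^{\mathsf{NP}}$-complete problem is, given a Boolean formula $F((x_{i,r}))$ together with a threshold encoding, to decide the value of a polynomial-time predicate applied to the number of satisfying assignments, or more directly to decide whether the number of $M \in [0,2^m-1]$ with $C(\BIN_m(M))=1$ exceeds a given bound. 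The key observation is that $\mathsf{P}^{\mathsf{NP}}$ has an $\NC^1$-serializability-style characterization in which the outer automaton is effectively a \emph{counter} (a threshold/majority check) rather than a general regular language; this is exactly the regime where $\EF$, which can count divisibility but has no until, suffices.

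The key steps, in order, are as follows. I would first fix a $\mathsf{P}^{\mathsf{NP}}$-complete language $L$ and write its computation in the serializable form of equation~(\ref{eq C_n}), producing for each input $x$ of length $n$ a circuit $C$ of depth $O(\log n)$ with $m=p(n)$ inputs. Next, exactly as in Steps 2--4 of the proof of Thm.~\ref{theo ctl data}, I would compute the first $m$ primes in unary, build the Chinese-remainder-to-binary conversion circuit $B_m$ from Thm.~\ref{theorem hesse und co}, compose it with $C$, and unfold the composite $O(\log n)$-depth circuit into a polynomial-size Boolean formula $F((x_{i,r})_{i\in[m],0\le r<p_i})$ satisfying $F(\CRR_m(M)) = C(\BIN_m(M))$ for all $0\le M<2^m$. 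The third step is to invoke Prop.~\ref{prop main} to obtain, for the fixed $\EF$ formula $\varphi$, the OCN $\Omc(F)$ in which a $\sem{\varphi}$-path from $(\inp,M)$ to $(\out,M)$ exists iff $F(\CRR_m(M))=1$. The crucial fourth step is the \emph{assembly}: instead of chaining copies of $\Omc(F)$ through the transitions of an NFA, I would chain $2^m$-many "evaluation gadgets" in a single line, each gadget offering a choice of incrementing the counter by $1$ (advancing to the next assignment $M$) only when the local $\EF$ evaluation succeeds, and accumulating a tally of successes. Because $\EF$ itself can test divisibility (hence threshold/parity conditions via the $\varphi_i$-style constructions), I would arrange the final acceptance to be a threshold predicate on this tally that encodes the $\mathsf{P}^{\mathsf{NP}}$ oracle-query semantics.

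The main obstacle I anticipate is precisely that $\EF$ lacks an until operator, so one cannot directly thread a computation through an NFA as in Thm.~\ref{theo ctl data}; the work lies in showing that the particular structure of $\mathsf{P}^{\mathsf{NP}}$ computations (a single polynomial-time machine making adaptive $\NP$ queries, equivalently a census/majority condition) can be captured by an $\EF$-expressible condition over a linearly-structured OCN. Concretely, the hard part will be encoding the \emph{global} counting condition "the produced bit-string lies in $L(A)$" when $A$ must now be realized using only $\exists\X$ and $\EF$ modalities: I would resolve this by choosing the $\mathsf{P}^{\mathsf{NP}}$-complete problem to be an explicit threshold-of-$\#\mathrm{SAT}$ problem so that the outer check is a single divisibility/comparison test against a fixed target counter value, which Prop.~\ref{prop main}'s gadget (reaching a distinguished $\out$ location at a prescribed counter value) already supports. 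Finally I would verify that the whole reduction is computable by a logspace transducer — inheriting logspace-computability from Thm.~\ref{theorem hesse und co}, from the unary primes, and from the unfolding in Step~4 — and that correctness follows by composing the equivalences (\ref{gl F}) and Prop.~\ref{prop main}; the matching upper bound is already supplied by \cite{GoMaTo09}, yielding $\mathsf{P}^{\mathsf{NP}}$-completeness.
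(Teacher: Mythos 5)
There is a genuine gap, and it sits at the heart of your reduction. Your plan hinges on two claims: (i) that a threshold/census condition on the number of accepted indices $M$ (``does $|\{M : C(\BIN_m(M))=1\}|$ exceed a given bound?'') is a canonical $\mathsf{P}^{\mathsf{NP}}$-complete problem, and (ii) that $\mathsf{P}^{\mathsf{NP}}$ has a serializability-style characterization in which the outer check is a counter/majority condition. Both are wrong. Deciding whether the number of satisfying assignments of a formula exceeds a bound given in binary is complete for $\mathsf{PP}$, not for $\mathsf{P}^{\mathsf{NP}}$ (and with a unary bound it collapses to $\NP$, since one can guess the required number of distinct assignments). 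Likewise, in the serializability/leaf-language setting, a regular outer language characterizes $\PSPACE$ \cite{CaFu91,HLSVW93}, while a majority/threshold outer check characterizes $\mathsf{PP}$; neither yields $\mathsf{P}^{\mathsf{NP}}$. So even if your assembly step went through, you would have proven hardness for the wrong class. There is also a second, independent obstruction: an OCN has a single counter, which in your construction already stores the assignment index $M$, so the ``tally of successes'' has nowhere to live --- keeping it in the control locations would require exponentially many of them, destroying the logspace (indeed polynomial-time) bound, and chaining $2^m$ physical evaluation gadgets ``in a single line'' is likewise an exponential-size construction.

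The paper's actual proof takes a different route precisely to avoid this: it does \emph{not} use serializability at all. Instead it reduces from Wagner's problem --- given a Boolean formula, is its lexicographically maximal satisfying assignment even when interpreted as a natural number? --- which is genuinely $\mathsf{P}^{\mathsf{NP}}$-hard \cite{Wag87}. This condition needs no global counting: ``$M$ satisfies $F$, no $M'>M$ satisfies $F$, and $M$ is even'' is expressible by nesting $\EF$, $\exists\X$ and negation around the gadgets of Prop.~\ref{prop main}, with the counter holding only $M$. Your Steps 1--3 (the unary primes, the $\CRR$-to-binary circuit of Thm.~\ref{theorem hesse und co}, unfolding into the formula $F$, and Prop.~\ref{prop main}) do coincide with the paper's machinery and are fine; what you need to replace is the outer $\mathsf{P}^{\mathsf{NP}}$ condition, trading your threshold check for a lex-maximality check, which is exactly what $\EF$'s closure under negation and nesting can express.
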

The proof of Thm.~\ref{T EF} is very similar 
to the proof of Thm.~\ref{theo ctl data}, but does not use
the concept of serializability. We prove hardness by a reduction 
from the question whether the lexicographically maximal satisfying
assignment of a Boolean formula is even when interpreted as a natural number. This problem is 
$\mathsf{P}^{\mathsf{NP}}$-hard by \cite{Wag87}.
At the moment we cannot prove that the data complexity of 
$\EF$ over OCPs is hard for $\mathsf{P}^{\mathsf{NP}}$ 
(hardness for $\mathsf{P}^{\mathsf{NP}[\log]}$ was shown
in \cite{GoMaTo09}). Analyzing the proof of Thm.~\ref{T EF}
in \cite{GoLo09} shows that the main obstacle is the fact that 
converting from Chinese remainder representation into binary 
representation is not possible by uniform $\AC^0$ circuits
(polynomial size circuits of constant depth and unbounded
fan-in); this is provably the case.

In the rest of the paper, we sketch a second application of our lower bound
technique based on Thm.~\ref{theorem hesse und co}, see \cite{GoLo09} for more
details. This application
concerns  one-counter Markov decision processes.
{\em Markov decision processes} (MDPs) extend classical Markov chains by allowing so called
{\em nondeterministic vertices}. In these vertices,
no probability distribution on the outgoing
transitions is specified. The other vertices are called {\em probabilistic
vertices}; in these
vertices a probability distribution on the outgoing transitions is given.
The idea is that in an MDP a player Eve plays against nature (represented
by the probabilistic vertices).
In each nondeterministic vertex $v$, Eve chooses a probability distribution on the
outgoing transitions of $v$; this choice may depend on the past of the play (which is a path 
in the underlying graph ending in $v$) and is formally represented by a strategy for Eve.
An MDP together with a strategy for Eve 
defines a Markov chain, whose state space
is the unfolding of the graph underlying the MDP. Here, we consider infinite MDPs, which
are finitely represented by OCPs; this formalism was introduced in 
\cite{BraBroEteKucWoj09} under the name 
{\em one-counter Markov decision process} (OC-MDP).
With a given OC-MDP $\mathcal{A}$ and a set $R$ of control locations of the OCP underlying $\mathcal{A}$ 
(a so called {\em reachability constraint}),
two sets were associated in \cite{BraBroEteKucWoj09}:
$\ValOne(R)$ is the set of all vertices $s$
of the MDP defined by $\mathcal{A}$ such that for every $\epsilon > 0$ there exists
a strategy $\sigma$ for Eve under which the
probability of finally reaching from $s$ a control location in $R$ and at
the same time having counter value $0$ is at least $1-\varepsilon$.
$\OptValOne(R)$ is the set of all vertices $s$
of the MDP defined by $\mathcal{A}$ for which there exists a specific strategy
for Eve under which this probability is $1$. It was shown in \cite{BraBroEteKucWoj09}
that for a given OC-MDP $\mathcal{A}$, a set of control locations $R$, and a vertex $s$
of the MDP defined by $\mathcal{A}$, the question if $s \in \OptValOne(R)$  is  
$\PSPACE$-hard and in $\EXPTIME$. The same question for $\ValOne(R)$
instead of $\OptValOne(R)$ was shown to be hard 
for each level of the Boolean hierarchy $\BH$, which
is a hierarchy of complexity classes between $\NP$ and  $\mathsf{P}^{\mathsf{NP}[\log]}$.
By applying our lower bound techniques (from Thm.~\ref{theo ctl
data}) we can prove the following.
\begin{theorem}\label{thm markov}
Membership in $\ValOne(R)$ is $\PSPACE$-hard.
\end{theorem}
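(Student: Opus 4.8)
The plan is to reduce from a $\PSPACE$-complete problem using essentially the same machinery developed in the proof of Thm.~\ref{theo ctl data}, but reinterpreting the existential path quantification of $\CTL$ in terms of a strategy for Eve in an OC-MDP. Recall that in that proof we took an arbitrary language $L \in \PSPACE$, exploited its $\NC^1$-serializability to obtain an NFA $A$ and a logspace-uniform $\NC^1$-circuit family $(C_n)_{n\geq 0}$, and then composed the Chinese-remainder-to-binary conversion circuit $B_m$ from Thm.~\ref{theorem hesse und co} with $C_n$ to get a polynomial-size Boolean formula $F$. The key feature was that $\Omc(F)$ (from Prop.~\ref{prop main}) admits a $\sem{\varphi}$-path from $(\inp,M)$ to $(\out,M)$ precisely when $F(\CRR_m(M))=1$. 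First I would re-use exactly this construction, building the same disjoint union of gadgets $\Omc(\tau)$ over the transitions $\tau\in\delta$ of $A$, so that walking from $(s_0,0)$ while incrementing the counter by one per transition traces out a word in $L(A)$ whose $M$-th letter is $C(\BIN_m(M))$.

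The main conceptual step is to turn this reachability witness into an OC-MDP reaching-with-probability-one statement. The idea is to make the nondeterministic (Eve) vertices correspond to the choices that previously were existential path choices — namely which transition $\tau$ of $A$ to follow and which branch of each $\vee$-gate inside $\Omc(F)$ to enter — while the \emph{verification} branches that previously checked a congruence $M\equiv r \bmod p_i$ are replaced by probabilistic vertices. Concretely, at each variable gadget $\inp(x_{i,r})$, instead of letting $\CTL$'s $\exists\X$ nondeterministically test divisibility, Eve must commit to passing through, and a probabilistic vertex tosses a coin (or branches over the residues) so that the play reaches the target control location with counter $0$ only when the committed residue is the true one. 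The counter test $\div(p_i)\xrightarrow{-p_i}\div(p_i)$, $\div(p_i)\xrightarrow{-1}\perp$ already verifies the congruence deterministically by draining the counter, so the probabilistic vertices serve only to force Eve to answer honestly: any dishonest choice assigns positive probability to a non-accepting sink, preventing probability one. Thus $s\in\ValOne(R)$ with $R$ the set of accepting-state control locations should hold iff the serialized product $\prod_{M} F(\CRR_m(M))$ is accepted by $A$, i.e.\ iff $x\in L$.

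I expect the principal obstacle to be the gap between $\ValOne$ and $\OptValOne$: $\ValOne(R)$ only requires, for every $\varepsilon>0$, a strategy achieving reachability probability at least $1-\varepsilon$, whereas the reduction sketched above most naturally certifies an $\OptValOne$ (probability exactly one) statement. The delicate point is therefore to design the probabilistic gadgets so that the \emph{supremum} of reaching probabilities over strategies is $1$ exactly when $x\in L$, and is bounded away from $1$ otherwise. The natural fix is to let Eve replay the serialization arbitrarily many times (since $\ValOne$ allows $\varepsilon$ to shrink), so that accumulated confidence from repeated honest traversals drives the probability arbitrarily close to $1$ in the positive case, while a single unavoidable cheating point in the negative case caps the probability strictly below $1$ uniformly; verifying this $\varepsilon$-robustness, together with the boundedness argument that no play overshoots the counter value $2^m$, is where the technical care lies. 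All of this stays within $\PSPACE$ by the known upper bound of \cite{BraBroEteKucWoj09}, and since $L$ was an arbitrary $\PSPACE$ language, $\PSPACE$-hardness of membership in $\ValOne(R)$ follows.
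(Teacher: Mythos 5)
Your first two paragraphs follow the paper's own reduction (carried out in detail in the full version \cite{GoLo09}): reuse the $\NC^1$-serializability setup, Thm.~\ref{theorem hesse und co}, and the gadgets of Prop.~\ref{prop main} exactly as in the proof of Thm.~\ref{theo ctl data}; let Eve own every vertex where an existential choice used to be made (choice of NFA transition, choice of $\vee$-branch, the draining loop at $\div(p_i)$); and replace each variable location $\inp(x_{i,r})$ by a probabilistic vertex branching between the main path and the divisibility check, with $R$ containing the locations in which a correct check, and the completed computation, end with counter value $0$. Up to there you are on track.

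Your third paragraph, however, rests on a misconception, and the ``fix'' it proposes would break the reduction. Since $\OptValOne(R)\subseteq\ValOne(R)$, a reduction in which positive instances admit a strategy attaining probability \emph{exactly} $1$, while negative instances have value bounded away from $1$, proves $\PSPACE$-hardness of membership in both sets simultaneously; this is precisely what the paper does, which is why it remarks that hardness for $\OptValOne(R)$ follows ``as a byproduct''. So certifying probability one in the positive case is not an obstacle but the goal: under the honest strategy, every probabilistic detour that is actually taken ends in $R$ with counter $0$, and the single play that never detours completes the main computation and also ends in $R$ with counter $0$, so the success probability is $1$ and no ``accumulated confidence'' over replays is needed. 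Worse, if you add a replay mechanism (routing stuck plays back to the initial configuration so that Eve can ``replay the serialization arbitrarily many times''), then on a \emph{negative} instance Eve can simply play honestly forever: every detour she is sent into succeeds, and whenever she gets stuck at counter $2^m$ (where the conjunct $\neg G$ blocks honest continuation) she restarts; this drives the value to $1$ on negative instances as well and collapses the two cases. The only delicate point is the negative direction, and it needs no new gadget, just an estimate: along any play, at most $N$ probabilistic vertices with true congruences can be crossed before the play halts or crosses a vertex whose congruence is false (the counter cannot pass $2^m$ honestly), where $N$ is exponential in the input but fixed by the instance; hence every strategy fails with probability at least $2^{-(N+1)}$, a bound independent of $\varepsilon$, which is exactly what $s\notin\ValOne(R)$ requires. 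Finally, your closing sentence is wrong on a side point: there is no known $\PSPACE$ (or any) upper bound for membership in $\ValOne(R)$ --- the paper states explicitly that its decidability is open; the known $\EXPTIME$ upper bound from \cite{BraBroEteKucWoj09} concerns $\OptValOne(R)$, and only hardness is claimed in Thm.~\ref{thm markov}.
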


As a byproduct of our proof, we also reprove $\PSPACE$-hardness for $\OptValOne(R)$.
It is open, whether $\ValOne(R)$ is decidable; the corresponding problem for 
MDPs defined by pushdown processes is undecidable \cite{EtYa05}.

%% in general the use of bibtex is encouraged

\bibliographystyle{abbrv}

\def\cprime{$'$}

\end{document}